\newcommand{\abs}[1]{\left\vert#1\right\vert}
\newcommand{\MMS}{\mathrm{MMS}}
\DeclareMathOperator*{\argmin}{argmin}
\definecolor{darkgreen}{rgb}{0.0, 0.7, 0.0}
\definecolor{darkred}{RGB}{220, 32, 32}
\newcommand{\ItemPartition}{{\tt{MinBundleKept}}\xspace}
\newcommand{\PropOracle}{{\tt {PROP1\text{-}nonPROP}}\xspace}
\theoremstyle{definition}
\newtheorem{definition}{Definition}
\theoremstyle{plain}
\newtheorem{theorem}{Theorem}
\newtheorem{lemma}{Lemma}
\newtheorem{example}{Example}
\newcommand{\nonl}{\renewcommand{\nl}{\let\nl\oldnl}}
\colorlet{mygray}{gray!40}
\colorlet{myblue}{cyan!60}
\colorlet{myPink}{blue!70}
\title{Logarithmic Comparison-Based Query Complexity for Fair Division of Indivisible Goods\thanks{The paper has been accepted for publication in the Conference on Web and Internet Economics (WINE 2024).
The work was done when Jiaxin was an undergraduate student at Shanghai Jiao Tong University.}}
\author[1]{Xiaolin Bu}
\author[2]{Zihao Li}
\author[3]{Shengxin Liu}
\author[4]{Jiaxin Song}
\author[1]{Biaoshuai Tao}
\affil[1]{Shanghai Jiao Tong University, \{lin\_bu, bstao\}@sjtu.edu.cn}
\affil[2]{Nanyang Technological University, zihao004@e.ntu.edu.sg}
\affil[3]{Harbin Institute of Technology, Shenzhen, sxliu@hit.edu.cn}
\affil[4]{University of Illinois, Urbana-Champaign, jiaxins8@illinois.edu}
\date{}
\begin{document}
\maketitle

\begin{abstract}
We study the problem of fairly allocating $m$ indivisible goods to $n$ agents, where agents may have different preferences over the goods.
In the traditional setting, agents' valuations are provided as inputs to the algorithm.
In this paper, we adopt the query model, which has been widely considered for other similar problems (such as matching~\cite{nisan2021demand}, graph isomorphism~\cite{onak2018isomorphism}, and equilibrium in game~\cite{babichenko2016equilibria}), and apply it to the fair division problem. 
In particular, we consider a new \emph{comparison-based query model}, where the algorithm presents two bundles of goods to an agent and the agent responds by telling the algorithm which bundle she prefers.
We investigate the query complexity for computing allocations with several fairness notions including \emph{proportionality up to one good} (PROP1), \emph{envy-freeness up to one good} (EF1), and \emph{maximin share} (MMS).
Our main result is an algorithm that computes an allocation that satisfies both PROP1 and $\frac12$-MMS within $O(\log m)$ queries with a constant number of $n$ agents.
For identical and additive valuation, we present an algorithm for computing an EF1 allocation within $O(\log m)$ queries with a constant number of $n$ agents.
To complement the positive results, we show that the lower bound of the query complexity for any of the three fairness notions is $\Omega(\log m)$ even with two agents.
\end{abstract}

\section{Introduction}

\emph{Fair division} of indivisible resources concerns how to fairly allocate a set of heterogeneous indivisible goods/items to a set of agents with varying preferences for these resources~\cite{AmanatidisAzBi23}.
The concept of fair division was initially introduced by Steinhaus~\citep{Steinhaus48,Steinhaus49} and has since been extensively studied in the fields of economics, computer science, and social science.
The primary objective of fair division is to achieve \emph{fairness} given that agents have different preferences for the resources.

\paragraph{Query-based fair allocation.}
In most existing literature, agents' preferences are typically represented by a set of \emph{valuation functions}, where each function assigns a non-negative value to every subset of items, reflecting how much an agent values it.
In the classical setting~\cite{budish2011combinatorial,procaccia2014fair,ghodsi2021fair,bu2022complexity,AkramiGa24}, a fair division algorithm assumes explicit knowledge of each agent's valuation function: these valuation functions are provided as inputs to the algorithm. This is referred to as \emph{the direct revelation}.

The direct revelation simplifies the theoretical analysis.
However, many real-world scenarios may not fit this neat model since agents' valuation functions may be unavailable upfront.
For example, to enhance user experience, many platforms adopt indirect and gentle approaches to understand users' preferences. 
Ridesharing platforms (e.g., Lyft and Uber) collect user ratings after each ride to gain insights into users' preferences for different types of cars.
Similarly, online homestay platforms (e.g., Airbnb and Ctrip) use recommendation systems to suggest rooms on their main pages, learning users' preferences by analyzing their clicks and interaction histories.

To model these scenarios, a line of works~\cite{lipton2004approximately,plaut2020almost,Oh_2021,LiMSS24} assumes that the allocation algorithm does not have explicit knowledge of agents' preferences.
Instead, a \emph{value-based query} is adopted to access an agent's value for a specific bundle of items.
This line of research focuses on optimizing the query complexity to compute a fair allocation.

\paragraph{Comparison-based queries.}
In this paper, we propose a new \emph{comparison-based query model}.
In this model, agents retain their \emph{intrinsic} valuation functions.
However, instead of requiring them to disclose their valuation functions directly or querying their explicit value for specific bundles, we learn about each agent's preferences by querying the value relationship among different bundles.
In each comparison-based query, we present two bundles to an agent and ask her to specify which bundle she prefers.
Under this model, we study the problem of optimizing query complexity to obtain a fair allocation.

Compared to the value-based query model, the comparison-based query model offers several advantages in many practical scenarios.
It is significantly more convenient for agents to report their ordinal preferences rather than numerical values.
When providing feedback or expressing preferences, most users prefer a simplified process. 
The comparison-based query allows users to answer only with a simple ``yes'' or ``no'', without the need to disclose concrete values.
It also protects agents' privacy to a large extent.
From the perspective of information theory, each response conveys only one bit of information, which is the minimum.
Our algorithmic results based on this new query model achieve fair allocations while making agents reveal as little information as possible.

\paragraph{Fairness criteria.}
Various notions of fairness have been proposed over the years.
Among them, \emph{envy-freeness} (EF)~\cite{Foley67,varian1973equity} and \emph{proportionality} (PROP)~\cite{Steinhaus49} are among the most natural ones. EF requires that no agent envies another, while PROP ensures that each agent receives at least an average share of the total value.

Unfortunately, such allocations may not always exist for indivisible items (e.g., allocating a single item to two agents).
The arguably most prominent relaxation of envy-freeness is \emph{envy-freeness up to one item} (EF1)~\citep{lipton2004approximately,caragiannis2019unreasonable}, which states that for any pairs of agents $i$ and $j$, agent $i$ will not envy $j$ if one item is (hypothetically) removed from $j$'s bundle.
As for proportionality, two different relaxations have been proposed in the literature: \emph{proportionality up to one item} (PROP1)~\citep{ConitzerFrSh17,AzizCaIg22} and
%\emph{maximin share (MMS)~\citep{akrami2023breaking,budish2011combinatorial,procaccia2014fair,ghodsi2021fair,AkramiGaSh23,AmanatidisMaNi17,BarmanKr20,FeigeSaTa21,GargTa21,GargMcTa19}.
\emph{maximin share} (MMS)~\citep{budish2011combinatorial,procaccia2014fair}.
%\bt{Did we include the most recent paper on MMS? Or, shall we just cite the paper introducing MMS, and leave the others to the related work section?}\zihao{I have added this, and I think it is better to cite only the paper introducing mms here, and add a new section in related work}\zihao{and do we need to order these cited papers in lexicographical order?}\shengxin{I agree that we can just mention the very first papers on MMS and give some brief discussion in the related work section.}
Similar to EF1, PROP1 ensures proportionality for each agent $i$ if one item is (hypothetically) added to $i$'s bundle.
For MMS, each agent $i$ has a threshold $\MMS_i$ that is defined as the maximum value of the least preferred bundle in any $n$-partition of the goods (where $n$ is the number of the agents).
An allocation is MMS if each agent receives a value at least their MMS threshold.

For additive valuations, EF1 and PROP1 allocations are guaranteed to exist~\citep{lipton2004approximately, budish2011combinatorial}, while an MMS allocation may not always exist~\citep{procaccia2014fair}.
Therefore, researchers study the approximation version of MMS: $\alpha$-MMS, indicating each agent's received value is at least $\alpha\cdot$ $\MMS_i$.
The best-known result to date is that an $\alpha$-MMS allocation always exists for a constant $\alpha=3/4 + 3/3836$~\cite{AkramiGa24}.
Although an MMS allocation always satisfies PROP1 for additive valuations, there exists a constant $\alpha<1$ such that PROP1 and $\alpha$-MMS are incomparable.
When the goods have similar values, PROP1 is close to PROP, whereas $\alpha$-MMS, being close to $\alpha$-proportionality, is a weaker criterion.
On the other hand, in cases where a single good is worth more than $1/n$ of the total value, an allocation where an agent receives an empty bundle can still be PROP1. In such cases, $\alpha$-MMS provides a stronger fairness guarantee.

% In this paper, we design an algorithm that computes allocations that satisfy both PROP1 and $\frac12$-MMS.

\paragraph{Sublinear complexity in goods.} 
% \paragraph{Sublinear complexity on number of goods.}
A challenge in many fair division models is that the number of goods $m$ can be extremely large (e.g., computational resources, housing resources, transportation resources, etc).
%Hence, direct revelations of the values of all the items may be infeasible.
Our comparison-based query model offers a solution for handling such cases.
We aim to design algorithms requiring a number of queries that is sublinear in the number of items $m$.
Note that direct adaptations of existing algorithms~\citep{lipton2004approximately,Oh_2021,LiMSS24} require the query numbers that are linear in $m$ to compute an EF1 allocation. 
%Therefore, our results are also tight with respect to $m$.

%As presented in Section~\ref{sec:prop1+mms}, our main algorithm can compute an allocation satisfying both PROP1 and $\frac12$-MMS in $O(\mathrm{poly}(n){\cdot}\log m)$ queries, which is just $O(\log m)$ for a constant number of agents, which is much less than the input length of the original direct revelation model.
%Moreover, in Corollary~\ref{thm:prop1_lower_bound}, we provide a lower bound of $\Omega(\log m)$ for even two agents.
%Therefore, our results are also tight with respect to $m$.

\subsection{Our Results and Technical Overviews}
In this paper, we focus on additive valuations and deterministic algorithms.
As the main result of the paper, we propose an algorithm that uses $O(n^4\log m)$ comparison-based queries to compute an allocation that is simultaneously PROP1 and $\frac12$-MMS.
%Moreover, we provide a lower bound of $\Omega(\log m)$ for any fixed number of agents.
For EF1, we present algorithms that use $O(\log m)$ queries for up to three agents and $O(n^2\log m)$ queries for $n$ agents with identical valuations.
Additionally, the results are complemented by the fact that $\Omega(\log m)$ queries are necessary to achieve any of the three mentioned fairness notions -- PROP1, EF1, and $\alpha$-MMS (for any $\alpha>0$) -- even for any fixed constant number of agents with identical valuations.
This lower bound matches the query complexity of our algorithms for constant numbers of agents.

Our main result is built upon three techniques: the subroutine \ItemPartition, the subroutine \PropOracle, and the \emph{Hall matching}. 
The \ItemPartition subroutine aims to find a partial allocation that is approximately balanced, with most of the goods allocated. 
It iteratively finds a tentative $n$-partition, finds the bundle with the minimum value using comparison-based queries, and keeps this bundle.
This subroutine serves as the key building block in all our algorithms.
Using \ItemPartition, we can design algorithms for computing PROP1 and EF1 allocations when agents' valuations are identical.

Roughly speaking, the \PropOracle subroutine decides whether a bundle approximately satisfies the proportional requirement of an agent.
Specifically, as discussed at the beginning of Sect.~\ref{sec:prop1}, it is impossible to verify with comparison-based queries whether an allocation satisfies the PROP1 condition for an agent or whether a bundle is worth at least $1/n$ of the agent's value of the entire item set (it is perhaps interesting to see that computing a PROP1 allocation is possible while checking whether an allocation is PROP1 is not).
The \PropOracle subroutine takes a bundle and an agent as inputs, and outputs ``yes'' if the bundle satisfies the PROP1 requirement for the agent, and ``no'' if the bundle fails to achieve PROP.
Notice that a bundle may satisfy both ``yes'' and ``no'' conditions; in this case, either outputs ``yes'' or ``no'' is acceptable.

With the subroutine \PropOracle, given $n$ bundles, we can construct a bipartite graph $G=(A,B,E)$ where $A$ contains $n$ vertices representing the agents, $B$ contains $n$ vertices representing the bundles, and an edge indicates a bundle surely satisfies the PROP1 requirement for an agent.
To find a PROP1 and $\frac12$-MMS allocation, we iteratively find a matching from $X\subseteq A$ to $Y\subseteq B$ such that no edge exists from an agent in $A\setminus X$ to a bundle in $B$. 
We then allocate the bundles in $Y$ to the agents in $X$.
This process guarantees that the agents in $A\setminus X$ believe the remaining unallocated items (i.e., the union of the bundles in $B\setminus Y$) are sufficiently valuable, guaranteeing that they will eventually receive a bundle satisfying the PROP1 and $\frac12$-MMS conditions.
Such a matching can be found by removing all agents in $A$ that fail the Hall condition $|\Gamma(S)|\geq |S|$, and we call this matching from $X$ to $Y$ as a \emph{Hall matching}.

\begin{figure}
    \centering
    \begin{tikzpicture}[
        every node/.style={draw, rounded corners, minimum width=3cm, minimum height=1cm, text width=3cm, align=center},
        every path/.style={thick, -{Stealth[round]}}
    ]

    \node[draw=none, font=\bfseries] (techniques) at (-4, 2) {Techniques};
    \node[draw=none, font=\bfseries] (results) at (-4, -1) {Results};

    \scriptsize
    \node (itemPartition) [fill=gray!15] at (0,2) {\ItemPartition};
    \node (propOracle) [fill=gray!15] at (4, 2) {\PropOracle};
    \node (hallMatching) [fill=gray!15] at (8, 2) {\emph{Hall Matching}};
    
    \node (ef1Identical) [fill=blue!10] at (0,0) {Identical EF1 \\ (\Cref{thm:ef1_identical})};
    \node (prop1Identical) [fill=blue!10] at (4, 0) {Identical PROP1 \\ (\Cref{thm:prop1_identical})};
    
    \node (prop1General) [fill=blue!10] at (8,0) {PROP1 \\ (\Cref{thm:prop1_general})};
    \node (prop1MMS) [fill=blue!10] at (4, -2) {\scriptsize PROP1 + $\frac12$MMS \\ (\Cref{thm:prop1_mms_general})};
    
    \path (itemPartition) edge (ef1Identical)
                          edge (prop1Identical);
    \path (propOracle) edge (prop1General)
                      edge[bend right=65] (prop1MMS);
    \path (hallMatching) edge (prop1General)
                         edge (prop1MMS);
    \path (ef1Identical) edge (prop1MMS);
    \path (prop1Identical) edge (prop1General);
    
    \end{tikzpicture}
    \caption{Overview of our main techniques and results, where an arrow represents the former technique/result is applied in the latter result.}
    \label{fig:technical_overview}
\end{figure}

The relationship between our techniques and results is shown in Fig.~\ref{fig:technical_overview}.
In Sect.~\ref{sec:prop1}, we present an algorithm with $O(n^4\log m)$ queries to find a PROP1 allocation.
In Sect.~\ref{sec:ef1}, we discuss our results on EF1 allocations.
Finally, in Sect.~\ref{sec:prop1+mms}, we present our main result: an algorithm with $O(n^4\log m)$ queries that finds an allocation satisfying PROP1 and $\frac12$-MMS simultaneously.
Note that although our result for PROP1 in Sect.~\ref{sec:prop1} is weaker than our main result in Sect.~\ref{sec:prop1+mms}, and the latter does not rely on the former, we discuss the PROP1 algorithm first, as it provides a simpler illustration of our three techniques.

\subsection{Related Work}
\label{sec:related_work}
\paragraph{Value-based query complexity for indivisible items.}
There have been several studies on value-based query complexity for computing fair allocations of indivisible items~\cite{plaut2020almost,Oh_2021}, which allows the algorithm to acquire agent $i$'s specific value for a bundle $X$ of items, i.e., $u_i(X)$.
Observe that each comparison-based query can be simulated by two value-based queries, which implies that the lower bound of comparison-based query complexity is no less than that of value-based query complexity asymptotically.
For value-based queries, \cite{plaut2020almost} showed that, even for two agents with identical and monotone valuations, the query complexity of computing an EFX (envy-free up to any item) allocation is exponential in terms of the number $m$ of indivisible items.
\cite{Oh_2021} explored the value-based query complexity for more fairness notions.
For EFX with two agents, \cite{Oh_2021} provided a lower bound of $\Omega(m)$ and a matching upper bound of $O(m)$ queries for additive valuations.
For EF1 with no more than three agents, the query complexity is $\Theta(\log m)$ for additive valuations and $O(\log^2 m)$ for monotone valuations. 
With more than three agents, a careful analysis of the classic \emph{envy-graph} procedure is provided, which requires $O(nm)$ queries in the worst case.
Recently, in parallel to our work, \cite{LiMSS24} also studied comparison-based queries. Specifically, they (1) showed that the classic round-robin algorithm can be implemented in $O(nm \log(m/n))$ comparison-based queries in the worst case, and (2) proved that $\Omega(nm + m \log m)$ comparison-based queries are necessary to implement the round-robin algorithm. We note that the round-robin algorithm can be used to return an EF1 or PROP1 allocation; however, our focus is on designing algorithms that use a sublinear number of comparison-based queries.
%
% \shengxin{Please help check the discussion on Warut's recent SAGT paper~\cite{LiMSS24}.}

\paragraph{Communication complexity.}
When there is no restriction on the queries, the communication complexity (i.e., the number of bits exchanged during interaction) of finding a fair allocation has also attracted great interest in existing papers~\citep{plaut2020communication,BN19,feige2024low}.
Plaut and Roughgarden~\cite{plaut2020communication} considered the communication complexity for $\alpha$-EF/$\alpha$-PROP beyond additive valuations.
Our primary positive result directly implies that the communication complexity of finding a PROP1 and $\frac12$-MMS allocation requires at most $O(n^4\log m)$ bits.
Feige's recent work~\cite{feige2024low} further improved the upper bound of finding a PROP1 allocation to $O(n\log m \log n)$ and showed that this bound also applies for finding an APROP (PROP1 and $\frac{n}{2n-1}$-TPS) allocation.
However, the above results cannot yield a better upper bound of the query complexity for finding a PROP1+$\frac12$-MMS allocation within the comparison-based query model.
Specifically, the former result requires each agent to compute a PROP1 allocation offline according to her own valuation function (thus no information exchange is needed), whose complexity is not included in the communication model yet is considered in our query model, and the latter result requires value-based queries.
\cite{feige2024low} also provided a lower bound $\Omega(n\log \frac{m}{n})$ of the communication complexity (even for randomized protocols) for finding a PROP1 allocation, which implies the lower bound under our model for a constant number of agents.

\paragraph{Query complexity for divisible items.}
In the context of fair division of \emph{divisible} resources (i.e., cake-cutting), a valuation function can be a general real function on an interval, which makes it impossible to be succinctly encoded.
Thus, the query complexity has been extensively studied in the cake-cutting literature~\cite{brams1995envy,robertson1998cake,stromquist2008envy,kurokawa2013cut,aziz2015discrete,aziz2016discrete,procaccia2017lower,branzei2022query,cechlarova2012computability}.
The most well-known query model is the Robertson-Webb (RW) query model~\cite{robertson1998cake}, where two types of queries, namely \textbf{Eval} and \textbf{Cut}, are allowed.
In particular, $\textbf{Eval}_i(x, y)$ asks agent $i$ the value of interval $[x,y]$ and $\textbf{Cut}_i(x,r)$ asks agent $i$ for a point $y$ such that agent $i$'s value to interval $[x,y]$ is $r$.
The moving-knife~\cite{dubins1961cut} and Even-Paz~\cite{even1984note} algorithms compute a proportional allocation with connected pieces within the complexity of $O(n^2)$ and $O(n\log n)$ RW queries respectively.
On the other hand, a lower bound of $\Omega(n\log n)$ RW queries is given by \cite{edmonds2006cake}.
Computing an envy-free allocation is more challenging.
An envy-free allocation can be found by the \emph{cut-and-choose} protocol for two agents.
For a general number of agents, \cite{aziz2016discrete} proposed an algorithm within a finite number of RW queries while the best known lower bound is $\Omega(n^2)$ by~\cite{procaccia2009thou}.

\paragraph{Other query models for indivisible items.}
Our problem is also implicitly related to the fair division of indivisible goods using only ordinal or partial information about the preferences (e.g.,~\citep{HS21,aziz2023possible,benade2022dynamic}).
We refer to the excellent survey for an overview of recent progress on fair allocations of indivisible goods~\cite{AmanatidisAzBi23}.

\section{Preliminaries}\label{sec:preliminary}
Denote $\{1, \dots, k\}$ by $[k]$. 
There are $m\in \mathbb{N}^+$ indivisible goods/items, denoted by $M = \{g_1, \ldots, g_m\}$, required to be allocated to $n\in \mathbb{N}^+$ agents, denoted by $[n] =\{1,\ldots,n\}$.
A \emph{bundle} is a subset of $M$. An allocation $\mathcal{A}=\{A_1,A_2,\dots,A_n\}$ is a $n$-partition of the set of goods $M$, where $A_i$ is the bundle assigned to agent $i$.
We say an allocation is a \emph{partial allocation} if the union of these bundles is a proper set of $M$ (i.e., $\cup_{i=1}^n A_i\subsetneq M$).

Assume each agent $i \in [n]$ has an \emph{intrinsic} and \emph{non-negative} valuation function $u_i:\{0,1\}^{[m]}\rightarrow \mathbb{R}_{\ge 0}$.
The set of all valuation functions is called value profile $\mathcal{U}$, which is not given as an input.
To learn about an agent's preference, an algorithm is only allowed to select an agent, present her with two bundles, and ask which one she prefers. 
Note that even if we could get the comparison result, we still do not know the explicit values of the two bundles.
Our query model is referred to as the \emph{comparison-based query model}.
\begin{definition}
We adopt the following query in the comparison-based query model: 
\begin{itemize}
    \item $\textbf{Comp}_i(X,Y)$: given two bundles $X, Y \subseteq M$, ask agent $i$ compare $u_i(X)$ and $u_i(Y)$ and return the bundle with a higher value (if $u_i(X)=u_i(Y)$, then return an arbitrary one).
\end{itemize}
\end{definition}

Throughout this paper, we assume all the valuation functions are \emph{additive}, i.e., $u_i(S)=\sum_{g\in S}u_i(\{g\})$ for any $i\in [n]$ and $S\subseteq M$.
For simplicity, we refer to $u_i(\{g\})$ as $u_i(g)$.
We say the valuation functions are \emph{identical} if $u_1=u_2=\dots=u_n$.

To measure fairness, the most natural notions are \emph{envy-freeness} (EF) and \emph{proportionality} (PROP).
An allocation $\mathcal{A}$ is said to satisfy \emph{envy-freeness (EF)} if no agent prefers the bundle of another agent to her own, i.e., $u_i(A_i) \geq u_i(A_j)$ for all $i, j\in [n]$
An allocation $\mathcal{A}$ is said to satisfy \emph{proportionality (PROP)} if $u_i(A_i)\ge u_i(M)/n$ for any agent $i$.

Since EF and PROP may not always be satisfied, we consider the following relaxations, EF1 and PROP1.
For a verbal description, EF1 requires that no agent envies another agent following the removal of some single good from the other’s bundle.
PROP1 ensures that, for every agent $i$, there exists an item $g$ such that the total value of agent $i$'s allocation is at least $1/n$ of the overall value (i.e., $u_i(M)$) if $g$ is included in agent $i$'s bundle.
\begin{definition}[EF1]
An allocation $\mathcal{A}$ is said to satisfy \emph{envy-freeness up to one item (EF1)} if, for any two agents $i$ and $j$, either $u_i(A_i)\geq u_i(A_j)$ or there exists an item $g\in A_j$ such that $u_i(A_i)\ge u_i(A_j\setminus\{g\})$.
\end{definition}

\begin{definition}[PROP1]
An allocation $\mathcal{A}$ is said to satisfy \emph{proportionality up to one item (PROP1)}, if for any agent $i$, there exists an item $g\in M\setminus A_i$ such that $u_i(A_i\cup\{g\})\ge u_i(M) / n$.
\end{definition}

To find an EF1 allocation for monotonic valuation functions, Lipton et al.~\cite{lipton2004approximately} proposed \emph{envy-graph}, where each vertex represents an agent and there is a directed edge from vertex $i$ to vertex $j$ if agent $i$ envies agent $j$.
The \emph{envy-graph procedure} starts from an empty allocation.
When there is an unallocated item, it selects a source agent (an agent whose corresponding vertex in the envy-graph has no incoming edges) and assigns that item to her, and then runs the \emph{cycle-elimination} algorithm until there are no cycles in the envy-graph.
\begin{definition}[Cycle-elimination]
If there is a cycle $i_1\rightarrow i_2 \rightarrow \ldots i_k \rightarrow i_1$ (we identify $i_{k+1}=i_1$) on the envy-graph, shift the bundles along this path (i.e., agent $i_j$ receives agent $i_{j+1}$'s bundle for any $j\in [k]$).
\end{definition}

Besides, \emph{maximin share} (MMS) is also extensively studied for indivisible items.
The maximin share of agent $i$, denoted by $\MMS_i$, is defined as the maximum value of the least preferred bundle among all allocations under valuation function $u_i$.
By the definition of MMS, we can observe that $\MMS_i\le \frac{1}{n}u_i(M)$.
An allocation is said to be MMS-fair if the value of the bundle received by each agent $i$ is at least $\MMS_i$, which may not always exist~\cite{procaccia2014fair}. 
An allocation is said to be $\alpha$-maximin share fair ($\alpha$-MMS) if each agent $i$ receives a bundle with at least $\alpha$-fractional of their maximin share.

\begin{definition}[MMS]
Let $\Pi(M)$ be the set of all the possible allocations.
The maximin share of agent $i$ is defined as $\MMS_i = \max_{\mathcal{A}\in \Pi(M)} \min_{j\in [n]} u_i(A_j)$.
\end{definition}

\begin{definition}[$\alpha$-MMS]
An allocation $\mathcal{A}$ satisfies $\alpha$-maximin share fair ($\alpha$-MMS), if for any agent $i$, $u_i(A_i)\ge \alpha\cdot \MMS_i = \alpha \cdot \max_{\mathcal{A}\in \Pi(M)} \min_{j\in [n]} u_i(A_j)$.
\end{definition}

In this paper, we study the problem of optimal query complexity (defined as the \emph{minimum number of comparison-based queries} required) to find a fair allocation.
Note that the algorithms in this paper are restricted to be deterministic.
We first show a lower bound of the query complexity for finding allocations satisfying the above fairness notions.
As discussed in~\Cref{sec:related_work}, a comparison-based query can be implemented by two value-based queries.
This directly implies that the lower bound for comparison-based query complexity is at least half of the one for the value-based query.
In \Cref{thm:prop1_lower_bound}, using the similar construction of~\cite[Theorem 5.5]{Oh_2021}, we demonstrate that the value-based query complexity for finding either a PROP1, EF1 or $\alpha$-MMS allocation is $\Omega(\log m)$ when $n$ is given as a constant number, which further implies the comparison-based query complexity of finding these fair allocations is also $\Omega(\log m)$

\begin{theorem}\label{thm:prop1_lower_bound}
The comparison-based query complexity of computing a PROP1/EF1/$\alpha$-MMS allocation is $\Omega(\log m)$ even for a constant number of agents with identical valuations. Here, $\alpha$ is an arbitrary real number in $(0,1]$.    
\end{theorem}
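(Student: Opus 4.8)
The plan is to reduce the problem to a lower bound in the \emph{value-based query model} and then transfer that bound across the three fairness notions. I would first observe that the value-based model is at least as powerful as the comparison-based one: a single query $\textbf{Comp}_i(X,Y)$ can be simulated by issuing the two value queries $u_i(X)$ and $u_i(Y)$ and comparing the returned numbers. Hence any comparison-based algorithm using $q$ queries is simulated by a value-based algorithm using at most $2q$ queries, so the value-based query complexity of any task is at most twice its comparison-based complexity. Consequently a value-based lower bound of $\Omega(\log m)$ immediately yields a comparison-based lower bound of $\tfrac12\Omega(\log m)=\Omega(\log m)$, and it suffices to prove the bound in the (more convenient) value-based model.

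Next I would establish the value-based $\Omega(\log m)$ bound for PROP1 via an adversary (indistinguishability) argument, already for two agents with identical additive valuations. I would fix a family of instances parameterized by a hidden index $k\in[m]$, designed so that the set of PROP1 allocations genuinely depends on $k$: in instance $I_k$ any PROP1 allocation must respect a ``cut'' whose location is controlled by $k$, and distinct indices yield disjoint feasible sets. The adversary answers each value query so as to stay consistent with the largest possible subfamily; since each answer is shared by a constant fraction of the surviving indices, the candidate set shrinks by at most a constant factor per query. Therefore, after fewer than $\log m-O(1)$ queries at least two indices remain consistent with the transcript, and no single output can be PROP1 for both, so the algorithm must err. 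This matches the binary-search intuition that the algorithm has to locate a hidden threshold among $m$ positions while each query reveals only $O(1)$ bits about it.

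To obtain the bound for EF1 and for exact MMS, I would use the standard fact that, for additive valuations, every EF1 allocation is PROP1, and (as noted in the preliminaries) every MMS allocation is PROP1. Thus any algorithm that always outputs an EF1 (resp.\ MMS) allocation in particular outputs a PROP1 allocation, so its query complexity is at least the PROP1 bound just established, and the same hard family suffices. For $\alpha$-MMS with arbitrary $\alpha\in(0,1]$, where this implication fails, I would instead use a companion family of identical-valuation instances containing two ``heavy'' items (each of value far exceeding the total of the remaining ``dust'' items) placed at hidden positions: any $\alpha$-MMS allocation must separate the two heavy items into different bundles, and separating them without knowing their location again requires localizing a hidden position, costing $\Omega(\log m)$ queries. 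All of these families use only two agents with identical valuations, so the bound holds for any fixed constant number of agents.

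I expect the main obstacle to be the instance design in the value-based step: the family must simultaneously (i) keep every query uninformative, so each answer is consistent with a constant fraction of the remaining candidates, and (ii) make the feasible allocations for distinct candidates incompatible, so that no output serves two survivors at once. The tension is that PROP1 is robust to one-item perturbations, so the hidden distinguishing feature cannot be a small change in a single value; it must be a ``large but binary-search-localizable'' feature (a heavy item/prefix or a hidden cut) that every feasible allocation is forced to respect. Verifying the factor-$\tfrac12$ loss in the reduction and checking the fairness conditions on the constructed instances are then routine.
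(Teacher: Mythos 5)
Your overall architecture matches the paper's: both proofs reduce to the value-based model via the two-value-queries-per-comparison simulation (this step is correct and identical to the paper's), and both then run an adversary argument against value queries on identical valuations. Your $\alpha$-MMS family is also essentially the paper's construction specialized to two agents: two high-value items hidden among worthless ones, with the adversary maintaining a candidate set $G$ that each query shrinks by at most a constant factor; as long as $|G|\ge 3$, any output allocation has a bundle containing two candidate positions, so the adversary can place both heavy items together and defeat $\alpha$-MMS (and in fact EF1 as well).

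The genuine gap is the PROP1 lower bound, which is the foundation of your EF1 and exact-MMS claims (via EF1 $\Rightarrow$ PROP1 and MMS $\Rightarrow$ PROP1) but is never actually constructed: you only posit a family with a ``hidden cut'' in which ``distinct indices yield disjoint feasible sets'' and every answer is consistent with a constant fraction of survivors, and the candidate constructions you name fail one of the two requirements. A prefix/cut family leaks too much information: a value query $u(H)$ has up to $|H|+1$ distinguishable answers, so the surviving set need not shrink by only a constant factor per query. Your two-heavy-item family, on the other hand, does not constrain PROP1 at all: with two agents and two heavy items, an agent receiving neither heavy item can hypothetically add one and reach half of $u(M)$, so \emph{every} allocation is PROP1 --- this is precisely the ``robust to one-item perturbation'' tension you flagged as the main obstacle, left unresolved. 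The paper's resolution is to hide $n+1$ unit-valued items (one \emph{more} than the number of agents) among $m$ zero-valued items: then PROP1, EF1, and $\alpha$-MMS all force every bundle to contain at least one (and at most two) valued items; the adversary can answer each query with either $n+1$ or $0$ while keeping all valued items inside a set $G$ that at worst halves per query; and whenever $|G|>2n$, some bundle of any proposed allocation contains three candidate positions, so the adversary can place three valued items there and starve another bundle entirely, killing all three notions at once. Note also that the paper's argument needs only that no single allocation is feasible for every surviving instance --- your stronger pairwise-disjointness requirement is unnecessary. With the $(n+1)$-item family substituted for your unspecified cut family, your proof goes through and essentially coincides with the paper's.
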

\begin{proof}
Consider $n$ agents with identical and binary valuation functions (i.e., with $u_i(g)\in\{0,1\}$ for each $g\in M$).
For each agent, there exist $n+1$ items 
such that this agent has a value of $1$ for each of them and a value of $0$ for the remaining items.
In this instance, if an allocation is PROP1 or $\alpha$-MMS for any $\alpha>0$, each bundle should include at least one valued item and, at most, two valued items.

Next, we answer the comparison adversarially as follows:
Let $G$ be initialized as  $M$.
When the algorithm queries the utility of a bundle $H$ for agent $i$, we return $n+1$ if  $\abs{G \cap H} \ge \abs{G}/2$ and $0$ otherwise.
If $\abs{G \cap H} \ge \abs{G}/2$, then we update $G$ to $H$.
Otherwise, we update $G$ to $G\setminus H$.
In each query, the only information the allocator gets is that all the valued items are located in $G$.
When $\abs{G} > 2n$, there exists a bundle with at least three items from $G$ in any allocation.
Hence, PROP1 and $\alpha$-MMS cannot be guaranteed.
Thus, the algorithm must keep querying until $\abs{G} \le 2n$.
Thus, the minimum number of value-based queries should be at least $\log\left(m/2n\right)$, which means the value-based query complexity is $\Omega(\log m)$ since $n$ is assumed to be much smaller than $m$.
As EF1 is stronger than PROP1, the lower bound for finding an EF1 allocation also holds.
\end{proof}

\section{Query Complexity of PROP1}\label{sec:prop1}
In this section, we focus on the query complexity of finding a PROP1 allocation.
A straightforward idea is to adapt the moving-knife method~\cite{stromquist1980cut,budish2011combinatorial,BILO2022197}, which requires approximately $O(\mathrm{poly}(n){\cdot}\log m)$ queries.
However, in our model, determining whether a bundle is PROP1 (or PROP) for an agent is impossible.
In \Cref{sec:prop1_oracle}, we provide two counter-examples and introduce the subroutine \PropOracle.
Next, in \Cref{sec:prop1_identical_val}, we show that a PROP1 allocation can be found in $O(n^2\log m)$ queries under identical valuations.
Finally, by applying the \PropOracle subroutine with the result from \Cref{sec:prop1_identical_val}, we demonstrate that even under non-identical valuations, a PROP1 allocation can be found in $O(n^4\log m)$ queries in \Cref{sec:prop1_nonidentical_val}.

\subsection{Subroutine for Determining PROP1 and non-PROP}
\label{sec:prop1_oracle}
Below, we provide two counter-examples to illustrate that it is not always possible to determine whether an allocation is PROP1 (or PROP) using the comparison-based query.
\begin{example}[Impossibility for PROP1]
Consider an instance with $m=n+1$ items.
Let $x$ be an unknown number such that $x\in (1,2)$.
For each $i\in [n]$, $u_i(g_1)=\ldots =u_i(g_n) =1$ and $u_i (g_{n+1}) =x$.
To determine whether an empty bundle $\emptyset$ is PROP1 for an agent $i$, we need to verify whether the unknown value $x$ satisfies $x \ge \frac{n+x}n$, which is equivalent to $x \geq \frac{n}{n-1}$.
However, this condition cannot be determined by comparison-based queries.
\end{example}

\begin{example}[Impossibility for PROP]
Consider an instance where $n = m =3$ and $u_i(g_1)= 0, u_i(g_2) = x, u_i(g_3) = 1$, where $0 < x< 1$.
To determine whether $\{g_2\}$ is PROP, one must figure out whether $x\ge 1/2$, which is still impossible to be determined by our query model.
\end{example}

\begin{algorithm}[h]
\caption{\PropOracle$(n, M, B, u)$}
\label{alg:oracle}
Line up the items in the order $(B, M\setminus B)$\;
Let $R\leftarrow M\setminus B$ be the remaining items\;
\For{$i= 1, \dots, n-1$}{
Find bundles $D_i$ and $I_i$ from the remaining item set $R$ such that $u(D_i)\le u(B)$ and $u(D_i\cup I_i)>u(B)$, where $I_i$ is a singleton\;
\If{$D_i$ and $I_i$ do not exist} {
Let $D_i\leftarrow R$ and $I_i \leftarrow \emptyset$\;
Update $R\leftarrow \emptyset$\;
\textbf{break}\;
}
Update $R\leftarrow R\setminus (D_i\cup I_i)$\;
}
\If{$R$ is empty}{
\Return $yes$\;
}
\Return $no$\;
\end{algorithm}

To tackle this issue, we introduce the subroutine, \PropOracle, as shown in Algorithm~\ref{alg:oracle}, which allows the overlapping of ``yes'' bundles (satisfying PROP1) and ``no'' bundles (not satisfying PROP).
Given a bundle $B\subseteq M$ and an underlying valuation $u(\cdot)$, which can be accessed through comparison-based queries, the subroutine verifies whether $u(B)<1/n$ or $u(B\cup\{g\})\ge1/n$ for some item $g$ outside $B$.
We first partition the items into a sequence $B, D_1,I_1,\ldots, D_{n-1}, I_{n-1}, R$ such that $\abs{I_i}\le 1$, each bundle $D_i$ is weakly less valuable than $B$, and each bundle $D_\ell\cup I_i$ is more valuable than $B$.
Some of the latter items and bundles, however, may be empty due to the lack of items.
This does not affect the correctness of our algorithm, and we can directly ignore the desired constraints for these bundles.
If there are no remaining items in the sequence, i.e., $R=\emptyset$, this implies that PROP1 and the algorithm will output ``yes''.
Otherwise, it will output ``no'', indicating $B$ is non-PROP.
Note that the range of ``yes'' and ``no'' may overlap. 
If a bundle satisfies PROP1 and its value is no more than the proportional value, the algorithm may output either ``yes'' or ``no''.
The correctness of this subroutine is stated in~\Cref{thm:oracle}.

\begin{restatable}{theorem}{PROPOnePROPOracle}
\label{thm:oracle} 
Given a bundle $B\subseteq M$ and a valuation function $u$, $B$ is PROP1 if \PropOracle$(n, M, B, u)$ outputs ``yes'' and is not PROP if it outputs ``no''.
In addition, \PropOracle always terminates in $O\left(n\log m\right)$ queries with the same running time.            
 \end{restatable}
\begin{proof}
In each iteration, the query complexity of finding $D_i$ and  $I_i$ is $O(\log m)$ using a binary search.
Therefore, the overall query complexity and running time of Algorithm~\ref{alg:oracle} are both $O(n\log m)$.
Next, we demonstrate the correctness of the output by the subroutine. 

Without loss of generality, assume $u(M)=1$.
Suppose the subroutine terminates after $\ell$ rounds, where $\ell \le n-1$.
If the algorithm outputs ``yes'', we prove that $B$ is PROP1 by contradiction.
Let $g$ be the most valuable item among all the items contained in $I_1,\ldots,I_\ell$.
If $B$ is not PROP1, then $u(B\cup\{g\})<1/n$.
Hence,
\begin{align*}
u(M) & = u(B) + \sum_{i=1}^\ell u(D_i\cup I_i) + u(R) = u(B) + \sum_{i=1}^\ell u(D_i\cup I_i) \tag{as ``yes'' is returned} \\
& \le u(B\cup \{g\}) + \sum_{i=1}^\ell u(D_i\cup I_i) \le (\ell+1)\cdot u(B\cup \{g\}) < \frac{\ell+1}{n} \le 1,\tag{as $\ell\le n-1$} 
\end{align*}
which contradicts the assumption that $u(M)=1$.
Therefore, $B$ satisfies PROP1.
In addition, if the algorithm outputs ``no'', we demonstrate that $B$ is non-PROP.
Observe that
$$
u(M)=u(B)+\sum_{i=1}^{n-1}u(D_i\cup I_i)+u(R)>\sum\limits_{i=1}^{n}u(B)+u(R)=n{\cdot}u(B)+u(R)\,.
$$
This implies that $u(B)<(1-u(R))/n \le 1/n$, which indicates that $B$ is non-PROP.
\end{proof}

\subsection{PROP1 under Identical Valuation}
\label{sec:prop1_identical_val}

\begin{algorithm}[t]
\caption{Computing a PROP1 allocation under identical valuation}\label{alg:prop1_identical}
\SetKwProg{Fn}{Function}{:}{}
{\tt \color{myPink} \small $\triangleright$ Phase I: Run subroutine $\ItemPartition([n], M)$}

\Fn{$\ItemPartition([n], M)$\label{alg:itempartition}}
{
Let $P\leftarrow M$ be the set of unallocated items\;
\While{$|P|\ge n$}{ \label{alg:prop1_fst_round_start}
    Partition $P$ evenly into $n$ bundles $(X_1, \dots, X_n)$ such that $\big| |X_i|-|X_j| \big| \le 1$ for all $i,j$\;
    Let $B_k\leftarrow A_k\cup X_k$ for each $k\in [n]$\;
    Let $i\leftarrow \argmin_{k\in[n]} u(B_k)$ be the index of the bundle among $(B_1,\dots,B_n)$ with the smallest value\;
    Update $A_i\leftarrow B_i$, $P\leftarrow P\setminus X_i$\;\label{alg:prop1_fst_round_end}
}
\Return{$(A_1, \ldots, A_i), P$}\;
}

$(A_1, \ldots,A_n), P\leftarrow \ItemPartition([n], M)$\;

{\tt \color{myPink} \small $\triangleright$ Phase II: Allocate the remaining items in $P$}

Sort $u(A_1), \ldots, u(A_n)$ and without loss of generality, we assume $u(A_1)\le  \dots\le u(A_n)$\label{alg:prop1_snd_round_start}\;
\While{$|P|> 0$}{
    Without loss of generality, assume $P=\{g_1,\dots, g_r\}$ and let $i\leftarrow \argmin_{k\in [r]} u(A_k\cup \{g_k\})$\label{alg:prop1_define_i}\;
    \If{$u(A_i\cup \{g_i\})\ge u(A_{r+1})$}{\label{alg:prop1_snd_round_if}
        Update $A_k\leftarrow A_k\cup \{g_k\}$ for $k\in[r]$, $P\leftarrow \emptyset$ \label{line:update_A_k}\;
        \textbf{break}\;
    }
    Update $A_i\leftarrow A_i\cup \{g_i\}$, $P\leftarrow P\setminus \{g_i\}$\;\label{alg:prop1_snd_round_end}

    Swap bundles to maintain the sorting $u(A_1) \le ... \le u(A_n)$;
}
\Return{the allocation $\mathcal{A}=(A_1, A_2,\cdots, A_n)$}
\end{algorithm}

Algorithm~\ref{alg:prop1_identical} considers a special case where the valuations are identical, represented by $u(\cdot)$.
The first phase of Algorithm~\ref{alg:prop1_identical} introduces a subroutine called \ItemPartition.
It begins with an empty allocation $\mathcal{A} = (\emptyset, \ldots, \emptyset)$.
Then, in each round, it evenly partitions the unallocated items into $n$ bundles $X_1,\ldots, X_n$ and attempts to append $X_i$ to agent $i$'s bundle $A_i$.
For the agent holding the bundle with the minimal utility, her bundle is updated to $A_i\cup X_i$.
For the other agents, their bundles remain unchanged, and $X_i$ is returned to the pool of unallocated items.
When fewer than $n$ items remain, \ItemPartition may no longer update the allocation: there may exist an agent $i$ such that $A_i\cup X_i$ has the smallest value with $X_i=\emptyset$.

When fewer than $n$ items are unallocated, the second phase (Lines~\ref{alg:prop1_snd_round_start}-\ref{alg:prop1_snd_round_end}) is conducted.
By sorting all the bundles, we assume $u(A_1) \le \cdots\le u(A_n)$ without loss of generality.
Next, in each round, let $r$ be the number of remaining unallocated items $g_1, \dots, g_r$.
These items are then added to the $r$ bundles $A_1, \ldots, A_r$ with the smallest values. 
If the bundle with the smallest value remains among the $r$ bundles, the new item is assigned to this bundle, while the remaining items are returned to the pool. The bundles are then swapped to maintain the condition $u(A_1)\le \cdots \le u(A_n)$.
Otherwise, if the bundle with the smallest value is no longer among the $r$ bundles, we directly assign each item $g_k$ to bundle $A_k$ for each agent $k\in [r]$ and terminate our algorithm.
The correctness of Algorithm~\ref{alg:prop1_identical} is stated in \Cref{thm:prop1_identical}.
\begin{restatable}{theorem}{propOneidentical}
\label{thm:prop1_identical}
For identical and additive valuations, a PROP1 allocation can be found by Algorithm~\ref{alg:prop1_identical} through $O(n^2\log m)$ queries with running time $O(n^2\log m)$.
\end{restatable}

\begin{proof}
To prove the output allocation is PROP1, we first show a key property of the allocation found by the first phase: $u(A_i)\le 1/n$ holds for each $i \in [n]$.
Assume, for contradiction, that there exits an agent $i \in [n]$ such that $u(A_i) > 1/n$.
Consider the iteration of the while-loop in which $A_i$ was updated to its current bundle, denoted as $A_i = A_i' \cup X_i$. Here, $A_i'$ is the bundle before the update.
By construction, $B_i = A_i' \cup X_i$ is the smallest bundle among the $n$ bundles in the allocation $(B_1,\ldots,B_n)$.
Hence, $\sum_{j\in[n]} u(B_j)\ge n\cdot u(B_i)>1$, which contradicts to the fact that $u(M) =1$. Thus, we have $u(A_i)\le 1/n$ for each $i \in [n]$.

Next, we prove that the output allocation is always PROP1 by analyzing whether the ``if''-branch (Line~\ref{alg:prop1_snd_round_if} of Algorithm~\ref{alg:prop1_identical}) in the second phase is executed.
If the ``if'' condition is never met within the second phase, then the bundle $A_n$ remains unchanged throughout the loop, as the utilities of the smallest $r$ bundles never exceed the utility of $A_{r+1}$, which is weakly less than $u(A_n)$.
From the first phase, we know that $u(A_n) \le 1/n$.
As a result, after the loop in the second phase, every bundle has a utility of at most $1/n$.
Since the sum of utilities across all bundles equals $u(M) = 1$, equality must hold for all bundles. This implies that the returned allocation is PROP.
Otherwise, we consider the case where the ``if''-branch is executed and the loop terminates after the ``if'' condition is satisfied.
Using a similar analysis as before, we can claim that, before executing the update at Line~\ref{line:update_A_k}, every bundle of $A_1, \ldots, A_n$ satisfies $u(A_j) \le u(A_n) \le 1/n$ for all $j \in [n]$ as $A_n$ is never changed.
Hence, there must exist some $k \in [r]$ such that $u(A_k \cup \{g_k\}) \ge 1/n$.
After the update at Line~\ref{line:update_A_k}, the bundle $A_{r+1}$ has the smallest value. Thus, it suffices to prove $A_{r+1}$ satisfying PROP1.
Observe that, $u(A_{r+1}) \ge u(A_k)$ before updating $A_k$.
Since $u(A_k \cup \{g_k\}) \ge 1/n$, it follows that $u(A_{r+1}) \ge 1/n - u(\{g_k\})$ and concludes the proof.

Finally, we analyze the query complexity and time complexity.
In the first phase, we execute the \ItemPartition subroutine.
Finding the smallest bundle costs $n$ queries in each iteration and the size of the unallocated pool decreases by a factor of $1/n$ each time.
Let $T_1(m)$ be the running time of the first phase with $m$ items unallocated.
Hence, $T_1(m)=T_1(\frac{n-1}{n}{\cdot} m) + O(n) = O(n\log_{\frac{n}{n-1}} m)=O(n^2\log m)$.
In the second phase, there are less than $n$ remaining items, so it runs for at most $n$ rounds. 
In each iteration, we find the smallest bundle and swap bundles to update the sorting, which costs $O(n)$.
Hence, the overall query complexity and running time are both $O(n^2\log m + n^2)=O(n^2\log m)$.
\end{proof}

\subsection{PROP1 under Non-identical Valuation}
\label{sec:prop1_nonidentical_val}

Based on \PropOracle and Theorem~\ref{thm:prop1_identical}, we are ready to present Algorithm~\ref{alg:prop1_general}, which can find a PROP1 allocation in $O(n^4 \log m)$ queries for additive valuations. 

\begin{algorithm}[t]
\caption{Computing PROP1 allocations for non-identical additive valuations}\label{alg:prop1_general}
Let $A_i\leftarrow \emptyset$ for $i\in[n]$\;
Let $P\leftarrow M$ be the set of unallocated items and $Q\leftarrow [n]$ be the set of agents not receiving items\;
\While{$|Q| > 0$}{
    Let an agent $k\in Q$ decide a PROP1 partition $\mathcal{B}= (B_1, \dots, B_{|Q|})$ of $P$ according to $u_k$ by Algorithm~\ref{alg:prop1_identical}\label{alg:additive_prop1_find_prop}\;
    Construct a bipartite graph $(Q, \mathcal{B}, E)$, where \label{alg:additive_prop1_construct_bip} 
    $$
    (i, B_j)\in E \quad \text{if $\PropOracle(|Q|,P, B_j, u_i) = yes$ or $i=k$}
    ;$$\\
    Let $Z\subseteq Q$ be the maximal subset such that $|Z|> |\Gamma(Z)|$, where $\Gamma(Z)$ denotes all the neighbors of $Z$\;
    Find a perfect matching between $Q\setminus Z$ and a subset of $\mathcal{B}\setminus \Gamma(Z)$\ \label{alg:prop1_general_matching}\;
    \ForEach{agent $i\in Q\setminus Z$ that is matched to $B_j\in \mathcal{B}\setminus \Gamma(Z)$}{
        Update $A_i\leftarrow B_j$, $Q\leftarrow Q\setminus \{i\}$ and $P\leftarrow P\setminus B_j$\;
    }
}
\Return{the allocation $\mathcal{A}=(A_1, A_2,\dots, A_n)$}
\end{algorithm}
\begin{figure}[t]
\centering
\begin{tikzpicture}[
    roundnode/.style={circle, draw=green!60, fill=green!5, very thick, minimum size=6mm},
    squarednode/.style={rectangle, draw=blue!60, fill=blue!5, very thick, minimum size=6mm},
    ]
    \node[roundnode] (L1) at (1.5, 0) {$1$};
    \node[roundnode] (L2) at (3, 0) {$2$};
    \node (dots1) at (4.5, 0) {$\cdots$};
    \node[roundnode] (Lk) at (6, 0) {$k$};
    \node (dots) at (7.5, 0) {$\cdots$};
    \node[roundnode] (Ln) at (9, 0) {$n$};

    \node[squarednode] (B1) at (1.5, -2) {$B_1$};
    \node[squarednode] (B2) at (3, -2) {$B_2$};
    \node (dots) at (4.5, -2) {$\cdots$};
    \node[squarednode] (Bk) at (6, -2) {$B_k$};
    \node (dots) at (7.5, -2) {$\cdots$};
    \node[squarednode] (Bn) at (9, -2) {$B_n$};

    \draw[semithick] (L1) -- (B1);
    \draw[semithick] (L2) -- (B1);
    \draw[semithick] (Lk) -- (B1);
    \draw[semithick] (Lk) -- (B2);
    \draw[preaction={draw,red!30,line width=5pt}] (Lk) -- (Bk);
    \draw[semithick] (Lk) -- (Bn);
    \draw[semithick] (Lk) -- (Bn);
    \draw[semithick] (Ln) -- (Bk);
    \draw[preaction={draw,red!30,line width=5pt}] (Ln) -- (Bn);

    \draw[thick, dashed, rounded corners] (1, 0.5) rectangle (3.5, -0.5);
    \draw[thick, dashed, rounded corners] (1, -1.5) rectangle (2, -2.5);
    \draw[thick, dashed, rounded corners] (3.6, 0.5) rectangle (9.5, -0.5);
    \draw[thick, dashed, rounded corners] (2.1, -1.5) rectangle (9.5, -2.5);
            
    \node at (0.5, 0) {$Z$};
    \node at (0.5, -2) {$\Gamma(Z)$};
    \node at (10.2, 0) {$\bar{Z}$};
    \node at (10.2, -2) {$\bar{\Gamma}(Z)$};
    \node at (12, 0) {\bf Agents};
    \node at (12, -2) {\bf Bundles};
\end{tikzpicture}
\caption{Illustration of the bipartite graph, where circle nodes and square nodes respectively represent the agents and bundles, and arrows in red shadow are the perfect matching.}
\label{fig:construct_of_bipa}
\end{figure}

\paragraph{Algorithm~\ref{alg:prop1_general}: overview.}
We begin with a high-level overview of Algorithm~\ref{alg:prop1_general}.
We first choose an arbitrary agent $k$ and let her decide a PROP1 allocation according to her valuation $u_k$ using Algorithm~\ref{alg:prop1_identical}.
Let the resulting allocation be $\mathcal{B} = (B_1,\dots, B_n)$, where each bundle is PROP1 to agent $k$.
At Line~\ref{alg:additive_prop1_construct_bip}, a bipartite graph is constructed, where the two sets of vertices correspond to the $n$ agents and the $n$ bundles in $\mathcal{B}$.
An edge is added between agent $i$ and bundle $B_j$ if \PropOracle determines that $B_j$ satisfies PROP1 for agent $i$ (i.e., \PropOracle outputs ``yes''). Additionally, edges are added between agent $k$ and all bundles.
Since each bundle in $\mathcal{B}$ has value at least $1/n$ for agent $k$, no agent will answer ``no'' to all bundles.
Thus, the graph contains at least one edge.

Next, as illustrated in \Cref{fig:construct_of_bipa}, we greedily find a maximal set $Z$ of agents such that $|Z| > |\Gamma(Z)|$, where $\Gamma(Z)$ denotes the set of neighbor bundles of $Z$. Let $\bar{Z}$ and $\bar{\Gamma}(Z)$ denote the complements of $Z$ and $\Gamma(Z)$, respectively. Since agent $k$ is adjacent to all bundles (by construction) from Line~\ref{alg:additive_prop1_construct_bip}, $k\notin Z$.
Thus, $\bar{Z}$ is non-empty, i.e., $|\bar{Z}|>0$.
Using Hall's marriage theorem~\cite{hall1987representatives}, we show in Lemma~\ref{lem:prop1_hall_marriage} that a perfect matching always exists between $\bar{Z}$ and a subset of $\bar{\Gamma}(Z) = \mathcal{B}\setminus \Gamma(Z)$.
In this perfect matching, each agent in $\bar{Z}$ is matched to an adjacent bundle in $\bar{\Gamma}(Z)$, while some bundles in $\bar{\Gamma}(Z)$ may remain unmatched. Each matched bundle is then allocated to the corresponding matched agent.
For the remaining agents in $Z$ and the unallocated items, we repeat the above process with fewer agents and items until each agent receives a bundle.
Below, we prove that the allocation output by Algorithm~\ref{alg:additive_prop1_find_prop} is always PROP1 and the algorithm terminates in $O(n^4\log m)$ queries and time.

\begin{restatable}{theorem}{PropOneGeneral}
\label{thm:prop1_general}
For non-identical additive valuations, a PROP1 allocation can be found by Algorithm~\ref{alg:prop1_general} via $O(n^4\log m)$ queries with running time $O(n^4\log m)$.
\end{restatable}
\begin{proof}
We first show that a perfect matching always exists between $\bar{Z}$ and a subset of $\bar{\Gamma}(Z)$ at Line~\ref{alg:prop1_general_matching}, which guarantees the validity of the algorithm.
\begin{restatable}{lemma}{PropOneHallMarriage}
\label{lem:prop1_hall_marriage}
A perfect matching always exists between $\bar{Z}= Q\setminus Z$ and a subset of $\bar{\Gamma}(Z) = \mathcal{B}\setminus \Gamma(Z)$ in Line~\ref{alg:prop1_general_matching} of Algorithm~\ref{alg:prop1_general}.
\end{restatable}
\begin{proof}
We show that the induced subgraph containing $\bar{Z}$, $\bar{\Gamma}(Z)$, and the edges between them satisfoes Hall's marriage theorem~\cite{hall1987representatives}.\footnote{Hall's marriage theorem states that for a bipartite graph $G=(A, B)$, if for any subset $S\subseteq A$, we have $|\Gamma(S)|\ge |S|$, then there exists a perfect matching such that each vertex in $A$ is matched to an adjacent vertex in $B$.}
Each vertex $v\in \bar{Z}$ is adjacent to at least one vertex in $\bar{\Gamma}(Z)$. For each subset $W\subseteq \bar{Z}$, $|W| \le |\Gamma(W)|$.
Otherwise, in both cases, we may add $v$ or $W$ to $Z$ to obtain a larger $Z$, which contradicts the maximality of $Z$.
Hence, a perfect matching always exists between $\bar{Z}= Q\setminus Z$ and a subset of $\bar{\Gamma}(Z) = \mathcal{B}\setminus \Gamma(Z)$.
\end{proof}

Next, we prove that the allocation output by Algorithm~\ref{alg:prop1_general} is always PROP1.
The high-level idea is that if a bundle $B_j$ is allocated to one agent $i\in Q$, then $B_j$ is PROP1 to agent $i$.
For any other agent $k\in Q$ not receiving any bundle, $u_k(B_j)$ is less than the proportional share of $u_i(P)$ since there exists no edge between $k$ and $B_j$.
Thus, she still can get a PROP1 bundle in the following allocation.

First, we prove that each agent will receive a bundle when Algorithm~\ref{alg:prop1_general} terminates.
In each iteration of the ``while'' loop, each matched agent in set $\bar{Z}$ receives a bundle and is subsequently removed from the agent set $Q$.
Since $|\bar{Z}|>0$, at least one agent is matched to a bundle in each iteration.
Therefore, every agent is guaranteed to receive a bundle.

Then, we prove by induction that each agent's allocated bundle satisfies PROP1.
For agents in $\bar{Z}$ who receives bundles from the matching in the first iteration, their allocated bundles are PROP1 since \PropOracle outputs ``yes''.
For agents in $Z$ that do not receive any bundles, the value of each matched bundle in $\bar{\Gamma}(Z)$ is less than their proportional value over the whole item set. This is because \PropOracle outputs ``no'', and such bundles belong to $\Gamma(Z)$ and, therefore, cannot be matched.
We then consider an agent $i$ who is matched in the $k$-th iteration ($k>1$).
Denote by $n_k$ the number of unmatched agents before the $k$-th iteration and denote by $V_k$ the total value of unallocated items before the $k$-th iteration under valuation function $u_i$.
Similar to the above analysis, agent $i$ receives a PROP1 bundle $A_i$ with respect to $n_k$ and $V_k$, and she thinks each bundle allocated in $(k-1)$-th iteration is less than $V_{k-1}/n_{k-1}$.
This leads to that there exists $g$ such that
$$u_i(A_i\cup\{g\})\ge \frac{V_k}{n_k}> \frac{1}{n_k}\cdot \left(V_{k-1}-(n_{k-1}-n_k)\cdot \frac{V_{k-1}}{n_{k-1}}\right) = \frac{V_{k-1}}{n_{k-1}},$$
which indicates $A_i$ is also PROP1 to agent $i$ in the $(k-1)$-th iteration.
Repeating this process, we can conclude that $u_i(A_i\cup\{g\})\ge 1/n$, thereby completing the induction step.

Finally, we analyze the query complexity and time complexity of Algorithm~\ref{alg:prop1_general}.
As we have shown in \Cref{thm:prop1_identical}, the complexity for computing a PROP1 allocation at Line~\ref{alg:additive_prop1_find_prop} is $O(n^2\log m)$, and constructing a bipartite graph invokes \PropOracle $n(n-1)$ times between each agent and each bundle at Line~\ref{alg:additive_prop1_construct_bip}.
In addition, finding the maximal subset $Z$ costs $O(n^2)$, and if $|Z|=k$, finding the perfect matching (using the Hungarian algorithm) costs $O((n-k)^3)$. Neither process invokes any queries.
Denote the time and query complexity of Algorithm~\ref{alg:prop1_general} on a given fair instance by $T(n, m)$ and $Q(n, m)$.
By the above statement, we have
$$T(n, m )=T(k, m)+O\left(n(n-1) \cdot n\log m\right)+O(n^2)+O((n-k)^3).$$
In the worst case, $k=n-1$.
As $T(1, m) = O(1)$, we have $T(n, m)=O(n^4\log m)$.
Meanwhile, as $Q(1, m) = O(1)$, the query complexity can be obtained by
$$Q(n, m)=Q(k, m)+O\left((n-1)^2{\cdot} \left(n\log m \right)\right)=O\left(n^4\log m\right).\qedhere
$$
\end{proof}

\section{Query Complexity of EF1}\label{sec:ef1}
This section studies the query complexity of computing an EF1 allocation.
For a constant number of agents, the well-known Round-Robin algorithm that guarantees EF1 requires $O(m\log m)$ queries as each agent needs to sort the items in descending order, and the envy-graph procedure requires $O(m)$ queries as each round of the procedure allocates one item and costs a constant number of queries.
There is still a gap between the existing algorithms and the lower bound $\Omega(\log m)$ to compute an EF1 allocation under the comparison-based query model for a constant number of agents.
In this section, we endeavor to bridge this gap and improve the query complexity to compute an EF1 allocation.

When there are two agents, the algorithm \emph{cut-and-choose} outputs an EF1 allocation. 
The query complexity mainly depends on the first agent's division step, which can be implemented by a line binary search with $O(\log m)$ queries, and the second agent's choosing step only requires one query.
Thus, we have the following theorem and the detailed proof is deferred to~\Cref{appendix:proof_of_ef1_for_two_agents}.

\begin{restatable}{lemma}{TwoAgentsAddEFOne}
\label{thm:2agents_additive_ef1}
For two agents with additive valuations, the query complexity and running time of computing an EF1 allocation are $O(\log m)$.    
\end{restatable}

Next, we introduce our main result for this section, which shows that the query complexity for EF1 is $O(n^2\log m)$ when the valuations are identical.
When the number of agents is constant, it then provides a tight $\Theta(\log m)$ bound.

\begin{restatable}{theorem}{EFIdentical}
\label{thm:ef1_identical}
    An EF1 allocation can be found through $O(n^2\log m)$ queries with running time $O(n^2\log m)$ for identical and additive valuation functions.
\end{restatable}

The case of $m < n$ is straightforward.
Our algorithm focuses on $m \ge n$.
A bundle is said to be larger (or smaller) than another if its value with respect to $u(\cdot)$ is larger (or smaller) than the other's.

\medskip
\noindent {\it{Step 1:}} First run the subroutine \ItemPartition(Algorithm~\ref{alg:prop1_identical}), which returns a partial allocation $(A_1, \dots, A_n)$ and a set of unallocated items $P=\{g_1,\dots, g_{n-1}\}$.\footnote{We can verify that there will always be exactly $n-1$ unallocated items when \ItemPartition terminates.
When $m > n$, there will be a round where $n < |P| \le 2n$. In this situation, at most two items are allocated each round until $|P| \le n$. Therefore, there must exist a point where $|P| = n$ or $n-1$. If the former case occurs, it will then turn to $|P| = n-1$ after one more round.
}
Without loss of generality, assume $u(g_1)\le \dots \le u(g_{n-1})$ and $u(A_1)\le  \dots \le u(A_n)$.

\medskip
\noindent {\it Step 2:} We step back to the round where $A_n$ was last updated in \ItemPartition.
In that round, we have a temporary allocation $\mathcal{B}=(B_1, \dots, B_{n-1}, A_n)$ such that $A_n$ is the one with the smallest utility among all.
Let $B_{n-1}$ be the bundle containing $g_{n-1}$ (if not, switch it to $B_{n-1}$).
Arrange the bundles from the left to the right in the order of $B_1, \dots, B_{n-1}, A_n$ (as shown in \Cref{fig:step2}) and place $g_{n-1}$ in the right of $B_{n-1}$ such that $g_{n-1}$ is adjacent to $A_n$.

\begin{figure}[h]
    \centering
    \begin{tikzpicture}
        \draw[black, semithick] (0,0) -- (15,0);
        
        \foreach \x in {0, 3, 6, 9, 12, 15} {
            \fill[black] (\x,0) circle (2pt);
        }
        \node[below] at (1.5,-0.1) {$B_1$};
        \node[below] at (4.5,-0.1) {$B_2$};
        \node[below] at (6,-0.2) {$\cdots$};
        \node[below] at (7.5,-0.1) {$B_{n-2}$};
        \node[below] at (10.5,-0.1) {$B_{n-1}$};
        \node[below] at (13.5,-0.1) {$A_n$};
        
        \node[fill=blue!30, draw, minimum size=5pt] at (11.8,0) {};
        \node[above, blue] at (12,0.1) {\scriptsize $g_{n-1}$};
    \end{tikzpicture}
    \caption{Arrange the bundles in a line and place $g_{n-1}$ to the rightmost of $B_{n-1}$}
    \label{fig:step2}
\end{figure}

\medskip
\noindent {\it Step 3:}
Update the first $n-1$ bundles in a moving-knife manner.
As shown in \Cref{fig:step3}, we find $n-1$ new bundles $C_1, C_2, \ldots, C_{n-1}$ from left to right such that $C_i$ for any $i\in [n-1]$ is weakly larger than $A_n$ and smaller than $A_n$ after removing $C_i$'s rightmost item, which ensures the EF1 conditions from $C_i$ to $A_n$ and from $A_n$ to $C_i$.
As $A_n$ is not larger than $B_i$ with $i\in [n-1]$, the right endpoint of bundle $C_i$ is to the left of (or at most the same as) $B_i$, which means such an allocation (maybe partial) $\mathcal{C}=(C_1,\dots, C_{n-1}, A_n)$ exists.
If there is no unallocated item between $C_{n-1}$ and $A_n$, an EF1 allocation is already found.
\begin{figure}[h]
\centering
    \begin{tikzpicture}
            \draw[black, semithick] (0,0) -- (15,0);  
            % \draw[darkred, semithick] (0,0) -- (8,0);  
            % \draw[darkred, semithick] (8,0) -- (12,0);  
            \foreach \x in {0, 3, 6, 9, 12, 15} {
                \fill[gray!70] (\x,0) circle (2pt);
            }
            \foreach \x in {0, 2, 4, 6, 8} {
                \fill[black] (\x-0.05,0.1) -- (\x+0.05,0.1) -- (\x+0.05,-0.1) -- (\x-0.05,-0.1);
            }
            \fill[black] (12-0.05,0.1) -- (12+0.05,0.1) -- (12+0.05,-0.1) -- (12-0.05,-0.1);
            \foreach \x in {15} {
                \fill[black] (\x-0.05,0.1) -- (\x+0.05,0.1) -- (\x+0.05,-0.1) -- (\x-0.05,-0.1);
            }

            \node[above] at (1.5,0.1) {\color{gray} $B_1$};
            \node[above] at (4.5,0.1) {\color{gray}  $B_2$};
            \node[above] at (6,0.2) {\color{gray} $\cdots$};
            \node[above] at (7.5,0.1) {\color{gray}  $B_{n-2}$};
            \node[above] at (10.5,0.1) {\color{gray}  $B_{n-1}$};
            
            \node[below] at (1,-0.1) {$C_1$};
            \node[below] at (3,-0.1) {$C_2$};
            \node[below] at (4,-0.2) {$\cdots$};
            \node[below] at (5,-0.1) {$C_{n-2}$};
            \node[below] at (7,-0.1) {$C_{n-1}$};
            \node[below] at (10,-0.1) {$R$};
            \node[below] at (13.5,-0.1) {$A_n$};
            
            \node[fill=blue!30, draw, minimum size=5pt] at (11.8,0) {};
            \node[above, blue] at (12,0.1) {\scriptsize $g_{n-1}$};
        \end{tikzpicture}
        \caption{Run moving-knife and the $n-1$ new bundles $C_1, \ldots, C_{n-1}$}
        \label{fig:step3}
\end{figure}

\medskip
\noindent {\it Step 4:}
If there are still unallocated items between $C_{n-1}$ and $A_n$, denote them by $R$.
We have $g_{n-1}\in R$ as it is placed directly to the left of $A_n$ in Step 2.
If $g_{n-1}$ is the only item in $R$, we proceed to Step~5.
Otherwise, we proceed with the following operation, iterating $g_k$ in the order $g_{n-2}$, $g_{n-3}$, $\ldots$, $g_1$: Suppose $g_k$ is included in $C_i$.
Then we exchange $g_k$ with a set of items $I$ of $R$ by including $g_k$ into $R$ and moving $I$ into $C_i$ so that the updated $C_i$ is still weakly larger than  $A_n$ (to guarantee EF1). 
We illustrate the process in \Cref{fig:step4_1} and \Cref{fig:step4_2}.

Note that we prioritize using the items in $R\setminus P$ to construct $I$.
$I$ may intersect $P$, in which case the algorithm terminates this step with a partial allocation $\mathcal{D} = (D_1, \dots, D_{n-1}, A_n)$ and $S=\emptyset$.
Otherwise, we repeat this process until no more exchanges can be made or $R\subseteq P$.
Then, we obtain a partial allocation $\mathcal{D}$ and a set of unallocated items $R$.
Set $S= R\setminus P$.

\begin{figure}[h]
\centering
\subfloat[Step~4 before exchanging item $g_k$ with $I$]{
        \begin{tikzpicture}
            \draw[black, semithick] (0,0) -- (15,0);  

            \foreach \x in {0, 2, 4, 6, 8, 12, 15} {
                \fill[black] (\x-0.05,0.1) -- (\x+0.05,0.1) -- (\x+0.05,-0.1) -- (\x-0.05,-0.1);
            }
            
            \node[below] at (3,-0.1) {$C_i$};
            \node[below] at (10,-0.1) {$R$};
            \node[below] at (13.5,-0.1) {$A_n$};
            
            \node[fill=red!30, draw, minimum size=5pt] at (8.2,0) {};
            \node[fill=red!30, draw, minimum size=5pt] at (8.5,0) {};
            \node[fill=red!30, draw, minimum size=5pt] at (8.8,0) {};
            \node[above] at (8.5,0.1) {$I$};

            \node[fill=darkgreen!30, draw, minimum size=5pt] at (3,0) {};
            \node[above, darkgreen] at (3,0.1) {\footnotesize $g_k$};

            \node[fill=blue!30, draw, minimum size=5pt] at (10,0) {};
            \node[above, blue] at (10,0.1) {\footnotesize $g_{k+1}$};

            \node at (10.9,0.1) { $\cdots$};

            \node[fill=blue!30, draw, minimum size=8pt] at (11.8,0) {};
            \node[above, blue] at (12,0.1) {\scriptsize $g_{n-1}$};
        \end{tikzpicture}
        \label{fig:step4_1}
    }

    \subfloat[Step~4 after exchanging item $g_k$ with $I$]{
        \begin{tikzpicture}
            \draw[black, semithick] (0,0) -- (15,0);  

            \foreach \x in {0, 2, 4.6, 6.6, 8.6, 12, 15} {
                \fill[black] (\x-0.05,0.1) -- (\x+0.05,0.1) -- (\x+0.05,-0.1) -- (\x-0.05,-0.1);
            }
            
            \node[below] at (3.6,-0.1) {$D_i$};
            \node[below] at (10.6,-0.1) {$R$};
            \node[below] at (13.5,-0.1) {$A_n$};
            
            \node[fill=red!30, draw, minimum size=5pt] at (3,0) {};
            \node[fill=red!30, draw, minimum size=5pt] at (3.3,0) {};
            \node[fill=red!30, draw, minimum size=5pt] at (3.6,0) {};
            \node[above] at (3.3,0.1) {$I$};

            \node[fill=darkgreen!30, draw, minimum size=5pt] at (9.7,0) {};
            \node[above, darkgreen] at (9.6,0.1) {\footnotesize $g_k$};

            \node[fill=blue!30, draw, minimum size=5pt] at (10,0) {};
            \node[above, blue] at (10.1,0.1) {\footnotesize $g_{k+1}$};
            
            \node at (10.9,0.1) { $\cdots$};

            \node[fill=blue!30, draw, minimum size=8pt] at (11.8,0) {};
            \node[above, blue] at (12,0.1) {\scriptsize $g_{n-1}$};
        \end{tikzpicture}
        \label{fig:step4_2}
    }
\caption{Illustration of Step 4}
\end{figure}
\medskip
\noindent {\it Step 5:} Add $S$ to an arbitrary agent's bundle. 
For the remaining items in $R$, run the envy-graph procedure to allocate them based on the partial allocation $\mathcal{D}$.

\begin{proof}[Proof of \Cref{thm:ef1_identical}.]
% The proof directly follows from~\Cref{lem:ef1_identical_ef1} and~\Cref{lem:ef1_identical_exchange}.
We first prove that the allocation returned by the above procedure is EF1. 
As guaranteed in our algorithm, for each bundle $C_i$ in $\mathcal{C}$ after Step~3, $A_n$ is EF1 to $C_i$ and $C_i$ is EF to $A_n$. 
Then, for any pair of bundles $C_i$ and $C_j$, there exists an item $g\in C_j$ such that $u(C_i)\ge u(A_n)\ge u(C_j\setminus\{g\})$.
Hence, the (partial) allocation $\mathcal{C}$ satisfies EF1.
If a complete allocation has not been found, our algorithm proceeds to Step~4.
In Step~4, as we still ensure $A_n$ is EF1 to each updated bundle $D_k$, we can show that $\mathcal{D}$ also satisfies EF1 using a similar analysis as above.

Next, in Step 5, we show that the utility of $S$ will always be zero, which does not destroy EF1 and concludes the property of EF1.
It is clear that $u(S) = 0$ when $S = \emptyset$.
Next, we assume $S = R\setminus P \neq \emptyset$.
As we have shown in the proof of \Cref{thm:prop1_identical}, for each $A_i, i\in [n]$, we have $u(A_i)\le 1/n$.
Let $u(A_n)=\alpha$. 
Since $u(A_1)\le\dots\le u(A_n)$, we have 
$$\sum_{i=1}^{n-1} u(g_i) = u(M)- \sum_{i=1}^n u(A_i) \ge 1-n\cdot u(A_n)=1-n\alpha.$$
% We abuse the symbol $S$ to denote $R\setminus P$ throughout the algorithm.
Consider the moment when we have just completed the exchange of item $g_{k+1}$. 
The items in $R$ are $(R\setminus P) \cup \{g_{k+1}, \dots, g_{n-1}\}$.
Now we are about to exchange $g_{k}$ with some items in $R$ and assume $g_k$ is in the bundle $C_i$.
There are three possible cases for the upcoming exchange step.
\begin{itemize}
\item 
If $R\setminus P =\emptyset$, then the algorithm terminates with $S = \emptyset$;

\item 
If $C_i$ is weakly larger than $A_n$ after receiving a subset of $S$, i.e., there exists a subset $S'\subseteq S$ and an item $g\in S'$, such that $u(D_i)=u(C_i\setminus \{g_k\} \cup S')\ge u(A_n)$ and $u(C_i\cup S'\setminus \{g_k, g\})< u(A_n)$, we exchange $g_k$ with $S'$.
The order of $R$ will be updated to $S\setminus S', g_k, g_{k+1},\dots, g_{n-1}$;
% \bt{Does this sequence end at $g_{n-1}$? If so, put the ending item.}

\item
Otherwise, $C_i$ is still smaller than $A_n$ even when receiving the entire set $S$.
Hence, we have $u(C_i\cup S\setminus \{g_k\})<u(A_n)$.
Since $u(C_i\cup S\setminus \{g_k\}) <u(A_n)$ and $u(g_{k+1})\ge u(g_k)$, we have $u(C_i\cup S\cup\{g_{k+1}\}\setminus\{g_k\})\ge u(C_i\cup S)\ge u(A_n)$.
Therefore, we exchange $g_k$ with $S\cup\{g_{k+1}\}$ and terminates the algorithm.
$D_i$ is updated to $C_i \cup S\cup \{g_{k+1}\}\setminus \{g_{k}\}$ and the order of $R$ is updated to $g_k, g_{k+2},g_{k+3},\ldots,g_{n-1}$.
\end{itemize}
The only scenario in which the algorithm terminates with a nonempty set $S\neq \emptyset$ occurs in the second case.
In that case, $I$ does not intersect with $P$ throughout the step, hence, $R$ contains all the items from $g_1$ to $g_{n-1}$ and $S\neq \emptyset$.
If $u(S)>0$, as $u(D_i)\ge u(A_n)$ for $i\in[n-1]$, it leads to
$$u(M)=\sum_{i=1}^{n-1} u(D_i)+u(A_n)+\sum_{i=1}^{n-1} u(g_i)+u(S)\ge n\alpha+ (1-n\alpha)+u(S)>1,$$
which contradicts to the assumption $u(M)=1$.
% \bt{Is it true that $S$ will always be the empty set after step 4? If so, why don't we just say ``we demonstrate that $S=\emptyset$'' instead of saying the utility of $S$ is zero (at the begining of the second paragraph)? If this is the case, I think we should also remark it when describing Step 5.}

Finally, it remains to show its query complexity and running time complexity.
In Step~1, the subroutine \ItemPartition requires $O(n^2\log m)$ queries as analyzed in~\Cref{thm:prop1_identical}.
In Step~3, as the items are arranged in a line, the moving-knife method for each bundle can be implemented using binary search to find the rightmost end of the bundle $C_i$, costing $O(n\log m)$ queries in total.
Similarly, the exchange operation for each bundle in Step~4 can also be done via binary search in $O(\log m)$ queries.
As proved before, after Step~4, $R\subseteq P$ or $u(R\setminus P) = 0$.
Hence, there are at most $n-1$ remaining items to be allocated during the envy-graph procedure.
Constructing the envy graph for the first time costs $O(n\log n)$ queries to sort the $n$ bundles.
Then, in each iteration, after an item is allocated to the smallest bundle, we only need to sort this specific bundle based on the original order, which can be done using $O(\log n)$ queries via binary search. 
Hence, we conclude that both the query complexity and running time are $O(n^2\log m)$.
\end{proof}

We further study the query complexity for three agents with additive valuations and show it is still $O(\log m)$.
This algorithm is almost a straightforward adaptation of~\cite[Algorithm 3]{Oh_2021}, and the above algorithm for identical valuation is also used.
The detailed algorithm and proof are deferred to~\Cref{sec:detailed_proof_of_ef1_3agents}.

\begin{restatable}{theorem}{EfThree}
\label{thm:3agents_additive_ef1}
For three agents with additive valuations, the query complexity and running time of computing an EF1 allocation are both $O(\log m)$. 
\end{restatable}

\section{Query Complexity of MMS+PROP1}
\label{sec:prop1+mms}
In the previous sections, we have already shown that a PROP1 allocation can be found via $O(\log m)$ queries for a constant number of agents with additive valuations.
In addition to PROP1, MMS is also a well-studied fairness notion in the fair division of indivisible goods.
This section provides an algorithm that can achieve PROP1 and $1/2$-MMS simultaneously via $O(n^4\log m)$ queries, which is shown in Algorithm~\ref{alg:prop1_mms_general}.

Before the detailed descriptions of our algorithm, we first define two types of bundles that are allocated to the agents during our algorithm. 
We call the bundle $A_k$ in Line~\ref{alg:prop1_mms_part1case1end} and the bundle $A_i$ in Line~\ref{alg:prop1_mms_part2case1end} \emph{large bundle} and the bundle $A_i$ in Line~\ref{alg:prop1_mms_part3end} \emph{normal bundle}.
A large bundle is a singleton containing a valuable item while a normal bundle may not consist of only one item.

\paragraph{Overview of Algorithm~\ref{alg:prop1_mms_general}.}
Similar to Algorithm~\ref{alg:prop1_general}, we perform a multi-round algorithm.
In each round, according to the set $Q$ (agents without receiving any items) and the set $P$ (unallocated items), we first select an arbitrary agent $k\in Q$ and find a PROP1 and $1/2$-MMS allocation subject to agent $k$'s valuation $u_k$, corresponding to the first phase of Algorithm~\ref{alg:prop1_mms_general} (Lines~\ref{alg:prop1_mms_part1begin}-\ref{alg:prop1_mms_part1end}).
In this process, we invoke subroutine $\ItemPartition(Q,P)$ defined in Algorithm~\ref{alg:prop1_identical} to find the partial allocation $\mathcal{B}$ and a set $P'$ of unallocated items whose size is less than the number of remaining agents in $Q$, i.e., $|Q|$.
If the two most valuable items of $P'$ are more valuable than any bundle in $\mathcal{B}$, then we can directly allocate the most valuable item in $P'$ to agent $k$ and restart the main procedure for the remaining items and agents by ignoring all previous normal bundles (corresponding to Lines~\ref{alg:prop1_mms_part1case1begin}-\ref{alg:prop1_mms_part1case1end}).
We remark that the singleton allocated satisfies PROP1+$1/2$-MMS.
Otherwise, we can run the procedure described in~\Cref{thm:ef1_identical} to obtain an EF1 allocation $\mathcal{C}$ under $u_k$, which is the target PROP1 and $1/2$-MMS allocation under valuation $u_k(\cdot)$.

\begin{algorithm}[htbp]
\caption{Computing PROP1+$1/2$-MMS allocation}\label{alg:prop1_mms_general}
\small
\SetKwProg{Fn}{Function}{:}{}
\Fn{$\MainAlgorithm(N, M)$\label{alg:prop1mms_main}}
{
\KwAssume{The agent set $N=\{1,\ldots,n\}$, the item set $M=\{g_1,\ldots,g_m\}$}
Let $A_i\leftarrow \emptyset$ for each $i\in N$, $P\leftarrow M$ be the set of unallocated items, and $Q\leftarrow N$ be the set of agents without receiving any item\;
\While{$|Q| > 0$}{

{\tt \color{myPink} \small $\triangleright$ Phase I: Find a PROP1 + $\frac12$-MMS allocation for an arbitrary agent $k$}

    Choose an arbitrary agent $k\in Q$\; \label{alg:prop1_mms_part1begin}
    Run $\ItemPartition(Q,P)$ under the valuation $u_k$ and obtain a partial allocation $\mathcal{B}=(B_1,\ldots,B_{|Q|})$ with the set of unallocated items $P'$\;\label{alg:prop1_mms_part1sub}
    Let $\mathcal{P}'=\{\{g\}:g\in P'\}$ be the set of all singletons\;
    \If{the two most valuable bundles among $\mathcal{B}\cup\mathcal{P}'$ under $u_k$ are both in $\mathcal{P}'$\label{alg:prop1_mms_part1case1begin}}
    {
    Set $A_k$ as the bundle which contains only the most valuable item in $P'$ under $u_k$\;
    Allocate bundle $A_k$ to agent $k$\;
    Let $\mathcal{A}'$ be the allocation returned by $\MainAlgorithm(Q\setminus\{k\},P\setminus A_k))$\;
    \Return{the allocation by merging the current allocation and $\mathcal{A}'$}\;\label{alg:prop1_mms_part1case1end}
    }
    Let $u'$ be the utility of the most valuable bundles among $\mathcal{B}$ under $u_k$\;\label{alg:prop1_mms_part1_defup}
    Execute Step~2 to Step~5 in the procedure described in Theorem~\ref{thm:ef1_identical} on the instance $(Q, P)$ and obtain an EF1 allocation $\mathcal{C}=(C_1,\ldots,C_{|Q|})$ under valuation $u_k$\; \label{alg:prop1_mms_part1end}

{\tt \color{myPink} \small $\triangleright$ Phase II: Construct bipartite graph $G$ where $(i, C_j)\in E$ if $C_j$ is PROP1 + $\frac12$-MMS to $i$}

    Construct a bipartite graph $G=(Q,\mathcal{C},E)$ where $E\leftarrow\emptyset$ is the initial edge set\;\label{alg:prop1_mms_part2begin}
    Create an undirected edge $(k,C_j)$ for every bundle $C_j\in \mathcal{C}$\;\label{alg:prop1_mms_part2edge}
    % Line up the items in $P$ in the order $(C_1,\ldots,C_{|Q|})$\;
    \ForEach{agent $i\in Q\setminus\{k\}$ and bundle $C_j\in \mathcal{C}$}
    {
    Execute $\PropOracle(|Q|, P, C_j, u_i)$ with lining up the items in $P$ in the order $(C_1,\ldots,C_{|Q|})$ and record the bundles after partitioning as $\{C_j,D_1,I_1,\ldots,D_{|Q|-1},I_{|Q|-1},R\}$\label{alg:prop1_mms_part2pairbegin}\label{alg:prop1_mms_part2construction}\;
    \If{``yes'' is returned}
    {   
        Let $\mathcal{D}\leftarrow \{C_j, D_1,\ldots,D_{|Q|-1}\}$ and $\mathcal{I} \leftarrow \{I_1,\ldots,I_{|Q|-1}\}$\;
        \If{the two most valuable bundles among $\mathcal{D}\cup \mathcal{I}$ under $u_i$ are both in $\mathcal{I}$\label{alg:prop1_mms_part2case1begin}}
        {
            Set $A_k$ as a singleton only containing the most valuable item in $\mathcal{I}$\;
            Allocate $A_k$ to agent $k$\;
            Let $\mathcal{A}'$ be the allocation returned by $\MainAlgorithm(Q\setminus\{k\},P\setminus A_k))$\;
            \Return{the allocation by merging the current allocation and $\mathcal{A}'$}\;\label{alg:prop1_mms_part2case1end}
        }
        \Else
        {
            Add an undirected edge $(i,C_j)$ on $G$\;\label{alg:prop1_mms_part2end}
        }
    }
    }
    
    {\tt \color{myPink} \small $\triangleright$ Phase III: Find a perfect matching and allocate items}
    
    Let $Z\subseteq Q$ be the maximal subset such that $|Z|> |\Gamma(Z)|$\; \label{alg:prop1_mms_part3begin}
    Find a perfect matching between $Q\setminus Z$ and a subset of $\mathcal{C}\setminus \Gamma(Z)$\;\label{alg:prop1_mms_part3perfect}
    \ForEach{agent $i\in Q\setminus Z$ that is matched to $C_j\in \mathcal{C}\setminus \Gamma(Z)$}{
        Update $A_i\leftarrow C_j$, $Q\leftarrow Q\setminus \{i\}$ and $P\leftarrow P\setminus C_j$\;\label{alg:prop1_mms_part3end}
    }
} 
\Return{the allocation $\mathcal{A}=(A_1, A_2,\dots, A_n)$}\;
}
\end{algorithm}

We then perform Lines~\ref{alg:prop1_mms_part2begin}-\ref{alg:prop1_mms_part2end} to construct a bipartite graph $G$ between all remaining agents in $Q$ and all bundles in the above EF1 allocation $\mathcal{C}$, similar to the bipartite construction in Algorithm~\ref{alg:prop1_general}. Here, an edge between an agent $i$ and a bundle $C_j$ is constructed only if allocating $C_j$ to agent $i$ can meet agent $i$'s requirement for both PROP1 and $1/2$-MMS.
We first execute the \PropOracle to identify the bundles that satisfy PROP1 to agent $i$.
However, different from that used in Algorithm~\ref{alg:prop1_general}, we also need to ensure $1/2$-MMS here. 
Thus, we consider the allocation in two subcases.
Lines~\ref{alg:prop1_mms_part2case1begin}-\ref{alg:prop1_mms_part2case1end} perform the first case where agent $i$ can receive only one item in $\{I_1,\ldots,I_{|Q|-1}\}$ to reach both PROP1 and $1/2$-MMS.
Under this case, we can again allocate the most valuable item to agent $i$ and restart the main procedure with only large bundles allocated.
Otherwise, we can claim that agent $i$ can meet the requirement for PROP1 and $1/2$-MMS after receiving $C_j$, corresponding to Line~\ref{alg:prop1_mms_part2end} of Algorithm~\ref{alg:prop1_mms_general}.

Lines~\ref{alg:prop1_mms_part3begin}-\ref{alg:prop1_mms_part3end} perform a similar reduction procedure as in Algorithm~\ref{alg:prop1_general}, where we find a maximal subset violating Hall's marriage theorem. We then find a perfect matching of the remaining agents to some bundles and reduce the original instance to a smaller one after removing these agents and their matched bundles.

We now present our main theorem: Algorithm~\ref{alg:prop1_mms_general} finds both a PROP1 and $1/2$-MMS allocation in $O(n^4\log m)$ queries.
Similar to the proof of Theorem~\ref{thm:prop1_general}, our high-level idea is to iteratively reduce the input instance's size and demonstrate the correctness of the recursion.

\begin{restatable}{theorem}{PropOneMMSGeneral}
\label{thm:prop1_mms_general}
For additive valuations, a PROP1 and $1/2$-MMS allocation can be found by Algorithm~\ref{alg:prop1_mms_general} in $O(n^4\log m)$ queries with running time $O(n^4\log m)$.
\end{restatable}
\begin{proof}
We prove the correctness of $\MainAlgorithm(N,M)$ in Algorithm~\ref{alg:prop1_mms_general} by induction based on the recursive calls of the algorithm.
As specified, there are two types of bundles: a large bundle is a singleton (Lines~\ref{alg:prop1_mms_part1case1end} and~\ref{alg:prop1_mms_part2case1end}) while a normal bundle may consist of multiple items (Line~\ref{alg:prop1_mms_part3perfect}).
In the base case, as no recursive call is invoked, no large bundle is allocated. 

\paragraph{Base case: no recursive call.}
To show the output allocation $\mathcal{A}$ satisfies PROP1+$1/2$-MMS, we adapt the analysis of \Cref{thm:prop1_general} here.
We first show that the allocated bundles are PROP1+$1/2$-MMS in the smaller instance with agent set $Q$ and item set $P$.
In \Cref{lem:prop1_mms_part1}, we prove that any bundle in the allocation $\mathcal{C}$ satisfies PROP1 and $1/2$-MMS for the fixed agent $k$.
Then \Cref{lem:prop1_mms_part2edge} validates the correctness of the construction of the bipartite graph: a bundle is PROP1 and $1/2$-MMS to an agent if they are adjacent.
After that, we extend our results to show the allocated bundles are PROP1+$1/2$-MMS on the larger instance with agent set $N$ and item set $M$ in \Cref{lem:prop1_mms_small} and \Cref{lem:prop1_mms_total2}.

\begin{restatable}{lemma}{PropMmsPartOne}
\label{lem:prop1_mms_part1}
In the fair division instance of allocating $P$ among agents $Q$, the allocation $\mathcal{C}$ output at Line~\ref{alg:prop1_mms_part1end} is PROP1 and $1/2$-MMS under valuation $u_k(\cdot)$.
\end{restatable}
\begin{proof}
From the procedure described in \Cref{thm:ef1_identical}, the allocation $\mathcal{C}$ is an EF1 allocation under the identical valuation $u_k(\cdot)$ and $u_k(C_i)\ge u'$ for each $C_i$, where $u'=\max_{i\in Q} u_k(B_i)$ is defined in Line~\ref{alg:prop1_mms_part1_defup} of Algorithm~\ref{alg:prop1_mms_general}.
Consider the partial allocation $\mathcal{B}$ and the unallocated items $P'$ output by $\ItemPartition(Q,P)$ under the valuation $u_k$, since the condition at Line~\ref{alg:prop1_mms_part1case1begin} is not met, $u'$ is one of the two largest values among the values of bundles in $\mathcal{B}$ and items in $P'$. 
If $u'$ is the largest one, we have 
$$u'\ge \frac{1}{|Q|+|P'|}u_k(P)\ge \frac{1}{2|Q|}u_k(P)\ge \frac12 \cdot \MMS$$ 
from $|P'|<|Q|$ and \Cref{lem:prop1_mms_total2}.
If $u'$ is the second largest one, the largest one should be a single item in $P'$ (denoted by $o$).
Hence, after removing this singleton, $u'$ is still the largest bundle.
Therefore,
$$
u'\ge \frac{1}{|P'|+|Q|-1}u_k(P\setminus\{o'\})
\ge \frac{1}{2}\frac{1}{|Q|-1}u_k(P\setminus\{o'\})
\ge \frac{1}{2}\cdot \MMS'\,.
$$
where $\MMS'$ is the maximin share after removing $o$ and an arbitrary agent $i\in Q$.
Assume $\MMS$ is obtained from an allocation $\mathcal{X}$.
We construct a new allocation for agents $Q\setminus \{i\}$ and items $P\setminus \{o\}$ by appending the other items of the bundle containing $o$ to another arbitrary bundle of $\mathcal{X}$, which weakly improves the utility of the least preferred bundle.
Therefore, $\MMS'\ge \MMS$ and $u'\ge 1/2\cdot \MMS$ always holds.
As $u_k(C_i) \ge u'$ for any $C_i$ and $\mathcal{C}$ is EF1, we can conclude $\mathcal{C}$ is both PROP1 and $1/2$-MMS under valuation $u_k(\cdot)$.
\end{proof}

\begin{restatable}{lemma}{PropMmsPartTwoEdge}
\label{lem:prop1_mms_part2edge}
In the instance of allocating $P$ among agents $Q$, bundle $C_j$ is PROP1 and $1/2$-MMS to an agent $i$ if they are adjacent in $G$.
Otherwise, bundle $C_j$ is not PROP to agent $i$.
\end{restatable}
\begin{proof}
For every agent $i$ and bundle $C_j$, if ``no'' is returned, by the property of \PropOracle, bundle $C_j$ is non-PROP to agent $i$.
Hence, it remains to prove that every added edge satisfies the requirement of PROP1 and $1/2$-MMS.    
The PROP1 condition is satisfied for the same reason as in~\Cref{thm:oracle}.
For MMS, since the condition at Line~\ref{alg:prop1_mms_part2case1begin} is not met, $C_j$ must be one of the two largest among $\mathcal{D}\cup \mathcal{I}$ under $u_i(\cdot)$.
From the same analysis as in the proof of \Cref{lem:prop1_mms_part1}, the $1/2$-MMS condition for agent $i$ receiving the bundle $C_j$ is satisfied.
\end{proof}

By \Cref{lem:prop1_mms_part1}, every bundle $C_j$ is PROP1+$1/2$-MMS to agent $k$, thereby agent $k$ is included in $\bar{Z}$ and each iteration the size of $Q$ reduces by at least one.
Therefore, the procedure must terminate and every agent in $N$ receives one bundle.
By \Cref{lem:prop1_mms_part2edge}, for every pair of adjacent of agent $i$ and $C_j$, bundle $C_j$ is PROP1+$1/2$-MMS to agent $i$ under the smaller instance $(Q, P)$.
Next, we demonstrate that bundle $C_j$ is also PROP1+$1/2$-MMS to agent $i$ under the larger instance $(N, M)$, which further concludes the proof of the induction basis.

We adapt a similar analysis to the proof of \Cref{thm:prop1_general}.
In \Cref{lem:prop1_mms_small}, we show that there always exists an allocation of the remaining items for an agent not receiving any item yet, such that every bundle of the allocation has a weakly higher utility than any allocated bundle.
It then implies the maximin share of this agent does not decrease as the instance is reduced.
Afterward, using \Cref{lem:prop1_mms_small}, we show the proportional value also does not decrease as the reduction of the instance.

\begin{lemma}\label{lem:prop1_mms_small}
In any round that allocates a set $P$ of items to a set $Q$ of agents, for each remaining agent $i\in Z$, we can partition the set of the remaining items into $|Z|$ bundles such that $i$ values each bundle weakly larger than those allocated bundles in this round at Line~\ref{alg:prop1_mms_part3end}.
\end{lemma}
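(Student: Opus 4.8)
The plan is to prove the lemma by first pinning down how the bundles allocated in this round look to a left‑over agent $i\in Z$, and then building the required $|Z|$‑partition of the unallocated items explicitly. First I would observe that, by the maximality of $Z$, every agent $i\in Z$ has $\Gamma(i)\subseteq\Gamma(Z)$, while the matching in Line~\ref{alg:prop1_mms_part3perfect} allocates only bundles from $\mathcal{C}\setminus\Gamma(Z)$; hence every allocated bundle $C_j$ is a non‑neighbour of $i$ in $G$. Since the round reached the matching step, no pair triggered a restart, so a non‑edge can only have come from \texttt{PROP1-PROP-Subroutine} returning \emph{no}, and Theorem~\ref{thm:oracle} gives $u_i(C_j)<\tfrac1{|Q|}u_i(P)$ for every allocated $C_j$. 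Writing $v^{\ast}:=\max_{C_j\text{ allocated}}u_i(C_j)<\tfrac1{|Q|}u_i(P)$, it therefore suffices to partition the leftover set $P':=P\setminus\bigcup_{j\text{ allocated}}C_j$ into $|Z|$ bundles each of $u_i$‑value at least $v^{\ast}$. This is exactly the claimed statement, and it is what propagates the invariant of Lemma~\ref{lem:prop1_mms_total} across one round, via the one‑round estimate $(|Q|-|Z|)\bigl(u_i(P')-u_i(o)\bigr)\ge(|Z|-1)\sum_{j}u_i(C_j)$ followed by transitivity back to the initial instance.

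For the construction I would run the partition step of the subroutine with respect to $u_i$ on the allocated bundle $C^{\ast}$ attaining $v^{\ast}$. Because $C^{\ast}$ is a non‑neighbour, the output is \emph{no}, so the subroutine cuts $P$ into $C^{\ast},D_1,I_1,\dots,D_{|Q|-1},I_{|Q|-1},R$ with $R\neq\emptyset$ and $u_i(D_\ell\cup\{I_\ell\})>v^{\ast}$ for each $\ell$. This already exhibits a $|Q|$‑partition of $P$ in which every part has $u_i$‑value at least $v^{\ast}$ (the $|Q|-1$ blocks $D_\ell\cup\{I_\ell\}$ together with $C^{\ast}\cup R$). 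The additional fact I would extract from the no‑restart assumption is that items are small relative to $v^{\ast}$: any item $g$ lying in a bundle $C_j\in\mathcal{C}\setminus\Gamma(Z)$ obeys $u_i(g)\le u_i(C_j)<\tfrac1{|Q|}u_i(P)$, and the restart tests of Lines~\ref{alg:prop1_mms_part1case1begin} and~\ref{alg:prop1_mms_part2case1begin} (which did not fire) limit how many large boundary items can survive unallocated. Using this size control together with the value surplus $u_i(P')>\tfrac{|Z|}{|Q|}u_i(P)>|Z|\,v^{\ast}$ — obtained by summing the $|Q|-|Z|$ upper bounds on the allocated bundles — I would isolate the (at most one) large item as its own bundle and then greedily pack the remaining small items, closing a bundle each time its value reaches $v^{\ast}$, to produce the desired $|Z|$ bundles.

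The main obstacle is precisely this packing step. The blocks produced by the subroutine and the allocated bundles are two \emph{different} partitions of $P$, so deleting the allocated bundles may simultaneously deplete several blocks; and a pure counting argument does not suffice, because total value $u_i(P')>|Z|\,v^{\ast}$ together with items of size up to $v^{\ast}$ only guarantees about $|Z|/2$ covered bundles — the usual factor‑two loss in bin covering. Overcoming this is where the restart conditions do the real work: they are exactly what prevents an accumulation of medium‑valued items in the unallocated region, and I would combine them with an averaging / ``two largest'' estimate in the spirit of the proof of Lemma~\ref{lem:prop1_mms_part1} to ensure the greedy packing never stops short of $|Z|$ bundles. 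I expect the delicate part of the write‑up to be making the item‑size bound quantitative enough (especially for an agent $i\in Z$ with few or no neighbours in $G$, where the bound must instead be read off directly from $u_i(g)\le u_i(C_j)<\tfrac1{|Q|}u_i(P)$) so that the greedy cover is provably tight against the $v^{\ast}$ threshold.
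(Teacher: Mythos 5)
Your setup is correct and coincides with the paper's: every bundle allocated in the round is a non-neighbour of $i\in Z$, non-edges at the matching stage can only come from `no' answers, so Theorem~\ref{thm:oracle} gives $u_i(C_j)<\tfrac{1}{|Q|}u_i(P)$ for each allocated $C_j$, and running the subroutine's partition against the most valuable allocated bundle $C^{\ast}$ yields $|Q|-1$ blocks $W_\ell=D_\ell\cup\{I_\ell\}$, each of value strictly larger than $v^{\ast}=u_i(C^{\ast})$. From there, however, your argument has a genuine gap, which you yourself identify: greedy bin covering from $u_i(P')>|Z|\,v^{\ast}$ and an item-size bound only yields roughly $|Z|/2$ bundles, and your hope that the restart conditions supply the missing size control cannot work. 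The restart test of Line~\ref{alg:prop1_mms_part2case1begin} sits inside the `yes' branch of the subroutine, so it is never evaluated for exactly the pairs you need, namely $(i,C_j)$ with $i\in Z$ and $C_j$ allocated --- those pairs returned `no'. Moreover, unallocated items (in particular the items $I_\ell$) can have value arbitrarily close to $v^{\ast}$, so no quantitative size bound of the kind you want exists; the factor-two loss along this route is real and unrecoverable.

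The idea you are missing is combinatorial, not metric. Because of the rearrangement at Line~\ref{alg:prop1_mms_part2pairbegin}, every bundle of $\mathcal{C}$ occupies a contiguous segment of the item sequence on which the subroutine's partition for $(i,C^{\ast})$ is computed (view the sequence as a cycle). Order the allocated bundles so that $u_i(S_1)\geq\cdots\geq u_i(S_j)$ with $S_1=C^{\ast}$ and $j+|Z|=|Q|$. Each $S_k$ with $k\geq 2$ then intersects at most two of the blocks $W_1,\ldots,W_{|Q|-1}$: if it met three, it would wholly contain one of them, contradicting $u_i(W_\ell)>u_i(S_1)\geq u_i(S_k)$. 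Now process $S_2,\ldots,S_j$ one at a time: delete the items of $S_k$ and merge the at most two blocks it touches into one; the merged remainder has value at least $2u_i(S_1)-u_i(S_k)\geq u_i(S_1)$, so the value threshold is preserved while the number of blocks drops by at most one per allocated bundle. Starting from $|Q|-1$ blocks you end with at least $(|Q|-1)-(j-1)=|Z|$ bundles, each of value at least $u_i(S_1)\geq\max_k u_i(S_k)$, and any leftover items of $R$ can be appended to an arbitrary bundle. This deterministic remove-and-merge argument is what replaces your packing step; no averaging or item-size bound is needed.
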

\begin{proof}
    Fix an agent $i\in Z$ and we assume that the allocated bundles in the passing round are $(C_{i(1)},\ldots, C_{i(t)})$ where $t= |\bar{Z}|$, $u_i(C_{i(1)})\ge \ldots \geq u_i(C_{i(t)})$ and $\abs{Z}+t = \abs{Q}$.
    % We further assume a permutation $(i(1),\ldots,i(t))$ of $[t]$ such that $u_i(C_{i(1)})\ge \ldots \geq u_i(C_{i(t)})$.
    Arrange items in $P\setminus C_{i(1)}$ in the same order as their order in $C_1,\dots, C_{\abs{Q}}$ and then execute \PropOracle for agent $i$ and bundle $C_{i(1)}$.
    Suppose the recorded partition is $\{C_{i(1)},D_1,\ldots,D_{|Q|-1},I_1,\ldots,I_{|Q|-1},R\}$.
    We then construct a new allocation with $\abs{Z}$ bundles such that every one has weakly higher value than any of $C_{i(j)}$ with $j\in[t]$ using the partition.
    We start from bundle $C_{i(1)}$.
    As $i$ is not adjacent to $C_{i(1)}$, \PropOracle outputs ``no'' for agent $i$ and bundle $C_{i(1)}$.
    Hence, $u_i(D_\ell \cup I_\ell) \ge u_i(C_{i(1)})$ for any $\ell \le \abs{Q} - 1$.
    We then can treat each $I_\ell\cup D_\ell$ as one of the initial $|Q|-1$ bundles $(W_1,\ldots,W_{|Q|-1})$ of item set $P\setminus C_{i(1)}$ with $W_\ell = D_\ell \cup I_\ell$ for $\ell < \abs{Q} - 1$ and $W_{\abs{Q}-1} = D_{\abs{Q}-1} \cup I_{\abs{Q}-1} \cup R$.
    \begin{figure}[h]
        \centering
        \begin{tikzpicture}
            \filldraw[blue!30, draw=black, line width=1pt] (0-1,0.3) -- (0+1,0.3) -- (0+1,-0.3) -- (0-1,-0.3) -- cycle;
            \node[] at (0, 0) {$C_{i(1)}$};
            \fill[pattern=north east lines, pattern color=gray] (0-1,-1+0.3) -- (0+1,-1+0.3) -- (0+1,-1+-0.3) -- (0-1,-1+-0.3) -- cycle;
            \filldraw[blue!30, draw=black, line width=1pt] (2.1-1,0.3) -- (2.1+1,0.3) -- (2.1+1,-0.3) -- (2.1-1,-0.3) -- cycle;
            \node[] at (2.1, 0) {$C_{1}$};
            \node[] at (4.6, 0) {$\cdots$};
            \filldraw[blue!30, draw=black, line width=1pt] (7.1-1,0.3) -- (7.1+1,0.3) -- (7.1+1,-0.3) -- (7.1-1,-0.3) -- cycle;
            \node[] at (7.1, 0) {$C_{i(2)}$};
            \node[] at (9.1, 0) {$\cdots$};
            \filldraw[blue!30, draw=black, line width=1pt] (11.1-1,0.3) -- (11.1+1,0.3) -- (11.1+1,-0.3) -- (11.1-1,-0.3) -- cycle;
            \node[] at (11.1, 0) {$C_{\abs{Q}}$};
            \filldraw[green!30, draw=black, line width=1pt] (2.1-1,-1+0.3) -- (2.1+1.5,-1+0.3) -- (2.1+1.5,-1+-0.3) -- (2.1-1,-1+-0.3) -- cycle;
            \node[] at (2.1+0.25, -1) {$W_1$};
            \node[] at (4.35, -1) {$\cdots$};
            \filldraw[green!30, draw=black, line width=1pt] (6.1-1,-1+0.3) -- (6.1+1,-1+0.3) -- (6.1+1,-1+-0.3) -- (6.1-1,-1+-0.3) -- cycle;
            \node[] at (6.1, -1) {$W_{j}$};
            \filldraw[green!30, draw=black, line width=1pt] (7.9-0.7,-1+0.3) -- (7.9+0.7,-1+0.3) -- (7.9+0.7,-1+-0.3) -- (7.9-0.7,-1+-0.3) -- cycle;
            \node[] at (7.9, -1) {$W_{j+1}$};
            \node[] at (9.15, -1) {$\cdots$};
            \filldraw[green!30, draw=black, line width=1pt] (10.9-1.2,-1+0.3) -- (10.9+1.2,-1+0.3) -- (10.9+1.2,-1+-0.3) -- (10.9-1.2,-1+-0.3) -- cycle;
            \node[] at (10.9, -1) {$W_{\abs{Q}-1}$};
        \end{tikzpicture}
        \caption{Illustration of the repartition using \PropOracle, where $(W_1, \ldots, W_{\abs{Q}-1})$ is a new partition of $P\setminus C_{i(1)}$ and $C_{i(2)}$ only intersects with bundles $W_j$ and $W_{j+1}$.}
    \end{figure}

    Next, consider bundle $C_{i(2)}$. 
    Since $u_i(C_{i(2)})\le u_i(C_{i(1)})$ and all items in $C_{i(2)}$ must be placed continuously by the rearrangement at Line~\ref{alg:prop1_mms_part2pairbegin}, $C_{i(2)}$ can intersect at most $2$ of $(W_1,\ldots,W_{|Q|-1})$. 
    % Since $u_i(C_{i(2)})\le u_i(C_{i(1)})$ and all items in $C_{i(2)}$ must be placed continuously by the rearrangement at Line~\ref{alg:prop1_mms_part2pairbegin} after treating the sequence as a cycle, $C_{i(2)}$ can intersect at most $2$ of $(W_1,\ldots,W_{|Q|-1})$. 
    We then can remove $C_{i(2)}$ and merge the two sets which contain some elements in $C_{i(2)}$. From $u_i(C_{i(2)})\le u_i(C_{i(1)})$, the merged bundle still has a weakly larger utility than $C_{i(1)}$ under $u_i(\cdot)$.
    If $C_{i(2)}$ is contained in exactly one set, we can directly remove that set.
    We then take similar steps for the remaining $C_i$ and conclude the lemma.
\end{proof}

Based on \Cref{lem:prop1_mms_small}, for each agent $i\in Z$, the utility of every allocated bundle is weakly less than the proportional value of the utility of the remaining items.
By repeatedly applying the lemma, we then derive \Cref{lem:prop1_mms_total2}, which states that the proportional value does not decrease.
\begin{lemma}\label{lem:prop1_mms_total2}
For the set $P$ of unallocated items and the set $Q$ of agents not receiving items during the procedure, for any $i\in Q$, we have $ u_i(P)/\abs{Q}\ge u_i(M)/\abs{N}$.
\end{lemma}

% \bt{I cannot see why this lemma holds. We have also allocated singletons to agents at Line~10, right? What if a remaining agent has a large value on an allocated item?}

By putting \Cref{lem:prop1_mms_small} and \Cref{lem:prop1_mms_total2} together, we can conclude that the bundle received by every agent in every round satisfies PROP1 and $1/2$-MMS under the larger instance.

\paragraph{Inductive step.}
Next, we prove the inductive step by showing the correctness of the recursion.
There are two recursive operations within Algorithm~\ref{alg:prop1_mms_general}.
In each of them, we allocate a large bundle (a singleton) to an agent and invoke Algorithm~\ref{alg:prop1_mms_general} on a smaller instance with the removal of this agent and the allocated item.
In \Cref{lem:prop1_mms_notypeone}, we first demonstrate the correctness of the recursion for the remaining agents.
Then in \Cref{lem:correctness_of_fst_item}, we show the allocation is also PROP1 and $1/2$-MMS for agents receiving a large bundle.

\begin{restatable}{lemma}{PropMmsNotypeone}
\label{lem:prop1_mms_notypeone}
    Let $I$ be a problem instance with item set $P$ and agent set $Q$.
    Fix an agent $i'$ and an item $o$.
    For each agent $i\neq i'$, define $\MMS_i$ and $\MMS'_i$ as the maximin share of agent $i$ under the original instance $I$ and the smaller instance $I'$ after removing agent $i'$ and item $o$.
    Assume $\mathcal{A}$ is a PROP1 allocation for the smaller instance $I'$. 
    Then for each $i\in Q\setminus\{i'\}$, we have
    \begin{itemize}[itemsep=0pt]
        \item $\MMS'_i\ge \MMS_i$;
        \item there exists an item $g\in P\setminus A_i$ such that $u_i(A_i\cup\{g\})\ge u_i(P)/\abs{Q}$.
    \end{itemize}
\end{restatable}
\begin{proof}
Fix an agent $i\in N\setminus\{i'\}$.
Assume $\MMS_i$ is obtained from an allocation $\mathcal{B}$.
We construct a new allocation, for instance $I'$, by appending the other items of the bundle containing $o$ to another arbitrary bundle of $\mathcal{B}$, which weakly improves the utility of the least preferred bundle.
Therefore, the first property $\MMS'_i\ge \MMS_i$ holds.

Second, if $u_i(o)\ge \frac{1}{|Q|}\cdot u_i(P)$, then $u_i(A_i\cup\{o\})\ge \frac{1}{|Q|}u_i(P)$ holds directly.
Otherwise, we have $u_i(o)<\frac{1}{|Q|}u_i(P)$, so $u_i(P\setminus\{o\}))>\frac{|Q|-1}{|Q|}u_i(P)$.
Since $\mathcal{A}$ is PROP1 in instance $I'$, by definition, there exists an item $g\in P\setminus (A_i\cup\{o\})$ such that $u_i(A_i\cup\{g\})\ge \frac{1}{|Q|-1}u_i(P\setminus\{o\})>\frac{1}{|Q|}u_i(P)$.
\end{proof}

\begin{restatable}{lemma}{CorrectnessOfFstItem}
\label{lem:correctness_of_fst_item}
    For a bundle $A_k$ of the first type allocated to an agent $k$, $u_k(A_k)\ge u_k(P)/2\ge \MMS_k/2$ and there exists a good $g\in M\setminus A_k$ such that $u_k(A_k\cup\{g\})\ge u_k(M)/ n$.
\end{restatable}
\begin{proof}
    We assume the corresponding sets $P$ and $Q$ when allocating $A_k$ to $k$.
    If bundle $A_k$ is allocated at Lines~\ref{alg:prop1_mms_part1case1begin}-\ref{alg:prop1_mms_part1case1end}, then let $g$ as the second most valuable item in $P'$. 
    We have 
    $$u_k(A_k)\ge \frac{1}{|Q|+|P'|}u_k(P)\ge \frac{1}{2|Q|}u_k(P
    ), \quad u_k(A_k)+u_k(g)\ge \frac{2}{|Q|+|P'|}u_k(P)\ge \frac{1}{|Q|}u_k(P)\,.$$
    From Lemma~\ref{lem:prop1_mms_total2}, we can conclude $u_k(A_k)\ge u_k(M)/2n$ and $u_k(A_k\cup\{g\})\ge u_k(M)/n$. 

    If this bundle $A_k$ is allocated at Lines~\ref{alg:prop1_mms_part2case1begin}-\ref{alg:prop1_mms_part2case1end}, we let $g$ as the second most valuable item in $\{I_1,\ldots,I_{|Q|-1}\}$. By the same analysis as before, we conclude the same statement.
\end{proof}

\paragraph{Query and time complexity.}
Finally, we consider the query and time complexity of Algorithm~\ref{alg:prop1_mms_general}.
After each while-loop iteration, the size of $N$ is decreased by at least $1$. 
% \bt{What does it mean by ``an iteration''? A while-loop iteration? If so, is it possible that Line 10 or 22 is executed after many iterations of the while loop (in which case only one agent is removed after many iterations, and we need to start over again)?}
We consider the query complexity and running time within each iteration.

At Lines~\ref{alg:prop1_mms_part1begin}-\ref{alg:prop1_mms_part1end}, the \ItemPartition subroutine is invoked for an agent and she is asked to find the two most valuable bundles, costing $O(n^2\log m)=O(n^2\log m)$ queries.
In Line~\ref{alg:prop1_mms_part1end}, both the query complexity and running time to construct an EF1 allocation under $u_k$ are $O(n^2\log m)$.
We then consider the second part, corresponding to Lines~\ref{alg:prop1_mms_part2begin}-\ref{alg:prop1_mms_part2end}.
By using the binary search technique in Line~\ref{alg:prop1_mms_part2construction}, it costs $O(n\log m)$ queries for each agent $i$ to construct the bundles from each $C_j$ and $O(n)$ to find the two most valuable bundles in Line~\ref{alg:prop1_mms_part2case1begin}.
Finally, finding the maximal subset $Z$ in Line~\ref{alg:prop1_mms_part3begin} costs $O(n^2)$ and if $|Z|=k$, finding the perfect matching costs $O((n-k)^3)$ in Line~\ref{alg:prop1_mms_part3perfect}, and both of them invoke no query.
Then, the running time $T(n,m)$ is given by
\begin{align*}
T(n, m)&=T(n-1, m)+O(n^2\log m)+O(n^2\log m)+O(n^3\log m)+O(n^2)+O((n-k)^3)\\
&=T(n-1, m)+ O(n^3\log m)= O\left(n^4\log m\right).
\end{align*}
Similarly, the query complexity $Q(n,m)$ is given by
\begin{align*}
Q(n, m) &= Q(n-1, m)+O(n^2\log m)+O(n^2\log m)+O(n^3\log m) = O\left(n^4\log m\right),
\end{align*}
which concludes the proof.
% By considering the number of the large bundles, which is $O(n)$, the query and time complexity are both $O(n^5\log m)$.
\end{proof}

\section{Conclusion and Future Work}
In this work, we proposed the comparison-based query model, where the valuation functions are not provided as the input to the allocation algorithms.
Our model provides a new approach to access the agents' preferences for items.
A highlight of our algorithms is that our results only rely on the comparison result between bundles, rather than the concrete values.
Moreover, we also studied the query complexity of computing a fair allocation and provided an efficient algorithm to compute an allocation satisfying both PROP1 and $\frac12$-MMS through at most $O(\log m)$ queries for a constant number of agents.
We believe this will provide inspiration for designing algorithms based on comparative relationships.

One of our future directions is to proceed the research under the EF1 notion for a constant number of agents and additive valuations.
Another interesting future direction is to consider general valuation functions (e.g., submodular valuations and monotone valuations) beyond additive ones.

\section*{Acknowledgments}
The research of Biaoshuai Tao was supported by the National Natural Science Foundation of China (No. 62102252 and 62472271). 
% \shengxin{Biaoshuai should include the new NSFC project number.}\bt{done}
The research of Shengxin Liu was partially supported by the National Natural Science Foundation of China (No. 62102117), by the Shenzhen Science and Technology Program (No. GXWD20231129111306002), by the Guangdong Basic and Applied Basic Research Foundation (No. 2023A1515011188), and by the Key Laboratory of Interdisciplinary Research of Computation and Economics (Shanghai University of Finance and Economics), Ministry of Education.

\bibliographystyle{alpha}
\bibliography{reference}

\appendix
\newpage

\section{Omitted Proofs in Sect.~\ref{sec:prop1}}
    
\subsection{An Alternative Approach to Defining \texttt{PROP1-PROP-Subroutine}}
\label{appendix:alt_prop1_prop_subroutine}

The primary algorithm for the subroutine is presented in Algorithm~\ref{alg:oracle_alt}.
Specifically, we first execute Algorithm~\ref{alg:prop1_identical} where we initialize the input number of the agents to be $n-1$ and the input item set to be $M\setminus B$.
The algorithm returns a PROP1 allocation with $n-1$ bundles, and we find the smallest bundle in the allocation, say $A_1$.
We output `yes' if $u(A_1)\le u(B)$ and `no' if $u(A_1)>u(B)$.

\begin{algorithm}[h]
\caption{\texttt{PROP1-PROP-Subroutine-ALT}$(n, M, B, u)$}\label{alg:oracle_alt}
Execute Algorithm~\ref{alg:prop1_identical} where the input is set to be $n-1$ and $M\setminus B$, and obtain allocation $(A_1, \dots, A_{n-1})$\;
Let $i\leftarrow \argmin\limits_{k\in[n-1]} u(A_k)$ be the index of the smallest bundle among $A_1,\dots,A_{n-1}$\;
\eIf{$u(A_i)\le u(B)$}{
    \Return{yes}\;
}{
    \Return{no}\;
}
\end{algorithm}

\paragraph{Correctness of Algorithm~\ref{alg:oracle_alt}.}
Both the query complexity and running time of Algorithm~\ref{alg:oracle} are $O\left(n^2\log m\right)$, the same as in \Cref{thm:prop1_identical}.
For simplicity, we assume $u(M)=1$, $u(B)=x$ and $u(A_1)=y$.
According to Theorem~\ref{thm:prop1_identical}, $(A_1, \dots, A_{n-1})$ is PROP1 for $M\setminus B$ and $n-1$ agents, hence, there exists an item $g\in M\setminus (A_1 \cup B)$ such that $y+u(g)\ge \frac{1-x}{n-1}$. 
If the algorithm outputs `yes', $y\le x$, then
$$x+u(g)\ge \frac{1-x}{n-1}\ \Rightarrow\ x\ge \frac1n - \frac{n-1}{n}u(g)\ \Rightarrow\ x+u(g)\ge \frac1n.$$
Hence, bundle $B$ satisfies PROP1.
If the algorithm outputs `no', then $y>x$ and
$B$ is non-PROP.
Otherwise, 
$$u(M)=\sum_{i\in [n-1]}u(A_i)+u(B) \ge (n-1)\cdot y+x> nx \ge 1,$$
which contradicts to $u(M)=1$.
\qed

\section{Omitted Proofs in Sect.~\ref{sec:ef1}}
\label{appendix:ef1}
\subsection{Proof of Theorem~\ref{thm:2agents_additive_ef1}}
\label{appendix:proof_of_ef1_for_two_agents}
\TwoAgentsAddEFOne*

\begin{proof}
For two agents, a straightforward algorithm \emph{cut-and-choose} can output an EF1 allocation. 
In the \emph{cut-and-choose} algorithm, the first agent divides all the items into two bundles, where she perceives the allocation as EF1 regardless of which bundle she receives.
Then, the second agent chooses a bundle that she feels is more valuable.
We can easily verify that the allocation is EF1 to the first agent and envy-free to the second agent.
The query complexity mainly depends on the first agent's division step, as the second agent's choosing step only requires one query.

For the division step, we claim it can be achieved in $O(\log m)$ queries by adopting the binary search method.
We arrange all the items in a line.
For any given item $g$, we define the bundle to the left of $g$ as $L(g)$ ($g$ is not included in $L(g)$), and the bundle to the right of $g$ as $R(g)$ ($g$ is not included in $R(g)$).
Our goal is to find the rightmost item $g$ such that $u_1(L(g))<u_1(R(g))$.
% and $v_1(L(g)\cup \{g\})\ge v_1(R(g))$.
It is straightforward that such an item $g_0$ exists due to the interpolation theorem, and can be found through a binary search algorithm.
Then a valid partition for the first agent is $(L(g_0)\cup\{g_0\}, R(g_0))$.

Therefore, the overall query complexity and running time are $O(\log m)$.
\end{proof}

\subsection{Proof of Theorem~\ref{thm:3agents_additive_ef1}}
\label{sec:detailed_proof_of_ef1_3agents}
\EfThree*
Our algorithm works in the following steps:

\medskip
\noindent \textbf{Step 1:} Compute an EF1 allocation $(A, B, C)$ with identical valuation function $u_1$ according to the algorithm described in Theorem~\ref{thm:ef1_identical}.
If agent $2$ and agent $3$ favorite different bundles, then assign the remaining bundle to agent $1$.
The current allocation is clearly EF1.
Otherwise, without loss of generality, we assume $u_2(A) > u_2(B) \ge u_2(C)$ and $u_3(A) > \max\{ u_3(B), u_3(C)\}$.

\medskip
\noindent \textbf{Step 2:} Next we arrange items of $A$ on a line and use binary search to divide bundle $A$ into $A'$ and $T$ such that $u_2(A') \le u_2(B)$ and $u_2(A'\cup \{g_t\}) > u_2(B)$, where $g_t$ is the leftmost item of $T$.
\begin{itemize}
    \item If $u_3(A') \ge \max\{u_3(B), u_3(C)\}$, then we allocate $A'$ to agent $3$, $B$ to agent $2$ and $C$ to agent $1$.
    For item set $T$ and identical valuation function $u_2$, we could compute an EF1 allocation $(T_1, T_2, T_3)$.
    Then we let the three agents select their preferred bundle in order $3\rightarrow 1 \rightarrow 2$.
    \item Otherwise, if $u_3(A') < \max\{u_3(B), u_3(C)\}$, we proceed to the next step.
\end{itemize}

\medskip
\noindent \textbf{Step 3:}  We call an item $g$ \emph{large} if $g\in T$ and $u_2\left(A'\cup \{g\}\right) \ge u_2(B)$.
Then we will find three large items (if exist) as follows:

First, since $g_t$ obviously satisfies the definition of a large item, it is the first one.

Next, we use binary search to find a set $E\subseteq T$ such that $u_2(A'\cup E) \ge u_2(B)$ and $u_2\left(A'\cup E \setminus \{g_e\}\right) < u_2(B)$, where $g_e$ is the rightmost item of set $E$.
\begin{itemize}
    \item If such set $E$ and item $g_e$ exist, then remove the items of $E\setminus \{g_e\}$ from $T$ and add them to $A'$.
    Thus, $g_e$ is a new large good.
    If $u_3(A') \ge \max\{u_3(B), u_3(C)\}$, then we return to step 2 with the updated bundle $A'$.
    Otherwise, the condition $u_3(A') < \max\{u_3(B), u_3(C)\}$ still holds.
    We could continue the above operation until we find three large items.
    \item Otherwise, if $E$ and $g_e$ do not exist, we remove all items except the (up to two) \emph{large} items from $T$ and add them to $A'$.
\end{itemize}

\medskip
\noindent \textbf{Step 4:} Let $S_3$ be agent $3$'s favourite bundle between $B$ and $C$, and let $S_1$ be the other bundle.
Then assign $A'$ to agent $2$ and $S_3$ to agent $3$ and $S_1$ to agent 1.

\medskip
\noindent \textbf{Step 5:} Compute an EF1 allocation $(T_1', T_2', T_3')$ with identical valuation function $u_3$ according to the algorithm in Theorem~\ref{thm:ef1_identical}, where the three large items are excluded from the item set.
Denote the three large items by $g_t,g_e,g_r$, and without loss of generality, assume $u_3(g_1)\ge u_3(g_2)\ge u_3(g_3)$, $u_3(T_1')\le u_3(T_2')\le u_3(T_3')$.
Then we obtain an allocation $(T_1'\cup \{g_1\}, T_2'\cup \{g_2\}, T_3'\cup \{g_3\})$ and denote it by $(T_1, T_2, T_3)$.

\medskip
\noindent \textbf{Step 6:} Consider the following cases of the locations of the large items:
\begin{itemize}
    \item If there exists a large item in each of $T_1$, $T_2$, and $T_3$, then let the three agents take their favorite bundle according to the order $2\rightarrow 1 \rightarrow 3$.
    \item Otherwise, that means there are at most two items within $T_1 \cup T_2 \cup T_3$.
    We assign the first large item to agent $2$ and the second one to agent $1$.
\end{itemize}

\begin{proof}[Proof of Theorem~\ref{thm:3agents_additive_ef1}]
To prove the allocation is EF1, we only need to show $(T_1, T_2, T_3)$ after Step~5 is EF1 under the identical valuation function $u_3$, and the proof in the rest of the part is the same as in~\cite{Oh_2021}.
As $(T_1',T_2',T_3')$ satisfies EF1, we know for any $i,j \in[3]$,
there exists an item $g\in T_j$ such that $u_3(T_i')\ge u_3(T_j'\setminus\{g\})$.
If $i>j$, $u_3(T_i)\ge u_3(T_i')\ge u_3(T_j')=u_3(T_j\setminus\{g_j\})$.
If $i<j$, $u_3(T_i)=u_3(T_i')+u_3(g_i)\ge u_3(T_j'\setminus\{g\})+u_3(g_j)=u_3(T_j\setminus\{g\})$.
Hence we conclude $(T_1, T_2, T_3)$ satisfies EF1.

As the number of agents is constant, the binary search in Step~2 and Step~3 requires $O(\log m)$ queries.
The query complexity to find an EF1 allocation under identical valuation is also $O(\log m)$ according to \Cref{thm:ef1_identical}.
The other operations in the above algorithm cost $O(1)$ queries.
Hence, the query complexity of computing an EF1 allocation for three agents is $O(\log m)$.
\end{proof}

\end{document}